\let\c@author\relax
\let\oldcitet\citet
\renewcommand{\citet}[1]{\mbox{\oldcitet{#1}}}
\definecolor{darkblue}{rgb}{0,0,0.45}
\definecolor{darkred}{rgb}{0.6,0,0}
\definecolor{darkgreen}{rgb}{0.13,0.5,0}
\theoremstyle{plain}
\newtheorem{theorem}{Theorem}%
\crefname{theorem}{Theorem}{Theorems}
\crefname{observation}{Observation}{Observations}
\newtheorem{lemma}[theorem]{Lemma}%
\newtheorem{claim}[theorem]{Claim}%
\newtheorem{definition}[theorem]{Definition}
\newcommand{\mc}[1]{{\mathcal{#1}}}
\newcommand{\eps}{{\varepsilon}}
\newcommand{\hy}{\hbox{-}\nobreak\hskip0pt}
\newif\iflabel
\newif\ifdbs
\newif\ifamp
\IfSubStr\expandafter{\BODY}{\label}{\labeltrue}{\labelfalse}%
\IfSubStr\expandafter{\BODY}{\\}{\dbstrue}{\dbsfalse}%
\IfSubStr\expandafter{\BODY}{&}{\amptrue}{\ampfalse}%
\def\doitallstar{}\else\def\doitallstar{*}\fi
      \def\doitallname{align}%
      \def\doitallname{multline}%
    \def\doitallname{equation}
\edef\x{\endgroup
    \noexpand\begin{\doitallname\doitallstar}%
    \noexpand\BODY
    \noexpand\end{\doitallname\doitallstar}%
  }\x
\def\[#1\]{\begin{doitall}#1\end{doitall}}
\newcommand{\pname}[1]{\textsc{#1}}
\newcommand{\remove}[1]{}
\newcommand{\executeiffilenewer}[3]{%
\ifnum\pdfstrcmp{\pdffilemoddate{#1}}%
{\pdffilemoddate{#2}}>0%
{\immediate\write18{#3}}\fi%
}
\newcommand{%
\executeiffilenewer{.svg}{.pdf}%
{inkscape -z -D --file=.svg %
--export-pdf=.pdf --export-latex}%
\everymath{\color{black}}%
\textcolor{black}{\input{.pdf_tex}}\everymath{\color{darkred}}%
}[1]{%
\executeiffilenewer{#1.svg}{#1.pdf}%
{inkscape -z -D --file=#1.svg %
--export-pdf=#1.pdf --export-latex}%
\everymath{\color{black}}%
\textcolor{black}{\input{#1.pdf_tex}}\everymath{\color{darkred}}%
}
\newcommand{%
\executeiffilenewer{.svg}{_.pdf}%
{inkscape -z -D --file=.svg --export-id= -j %
--export-pdf=_.pdf --export-latex}%
\input{_.pdf_tex}%
}[2]{%
\executeiffilenewer{#1.svg}{#1_#2.pdf}%
{inkscape -z -D --file=#1.svg --export-id=#2 -j %
--export-pdf=#1_#2.pdf --export-latex}%
\input{#1_#2.pdf_tex}%
}
\newcommand{\customlabel}[2]{%
   \protected@write \@auxout {}{\string \newlabel {#1}{{#2}{\thepage}{#2}{#1}{}} }%
   \hypertarget{#1}{}
}
\DeclareMathOperator{\cost}{cost}
\DeclareMathOperator{\dist}{dist}
\DeclareMathOperator{\diam}{diam}
\newcommand{\SF}{\pname{Steiner Forest}\xspace}
\newcommand{\ST}{\pname{Steiner Tree}\xspace}
\newcommand{\pow}[1]{\lfloor #1\rfloor_{2}}
\title{Parameterized Algorithms for\\
Steiner Forest in Bounded Width Graphs}
\author{Andreas Emil Feldmann\thanks{Department of Computer Science, University 
of Sheffield, UK}
\and
Michael Lampis\thanks{Universit\'{e} Paris-Dauphine, PSL University, CNRS 
UMR7243, LAMSADE, Paris, France, partially supported by ANR project 
ANR-21-CE48-0022 (S-EX-AP-PE-AL)}
}
\date{}
\begin{document}

\maketitle

\begin{abstract}

In this paper we reassess the parameterized complexity and approximability of
the well-studied \SF problem in several graph classes of bounded width. The
problem takes an edge-weighted graph and pairs of vertices as input, and the
aim is to find a minimum cost subgraph in which each given vertex pair lies in
the same connected component.  It is known that this problem is APX-hard in
general, and NP-hard on graphs of treewidth~3, treedepth~4, and feedback vertex
set size~2. However, Bateni, Hajiaghayi and Marx~[JACM,~2011] gave an approximation scheme
with a runtime of~$n^{O(k^2/\eps)}$ on graphs of treewidth~$k$. Our main result
is a much faster \emph{efficient parameterized approximation scheme (EPAS)}
with a runtime of~$2^{O(\frac{k^2}{\eps}\log\frac{k}{\eps})}\cdot n^{O(1)}$. If
$k$ instead is the vertex cover number of the input graph, we show how to
compute the optimum solution in~$2^{O(k\log k)}\cdot n^{O(1)}$ time, and we
also prove that this runtime dependence on $k$ is asymptotically best possible,
under ETH.  Furthermore, if~$k$ is the size of a feedback edge set, then we
obtain a faster~$2^{O(k)}\cdot n^{O(1)}$ time algorithm, which again cannot be
improved under~ETH.

\end{abstract}

\section{Introduction}

The \SF problem is one of the most well-studied problems in network
design~\cite{ljubic2021solving, hwang1992steiner, du2013advances,
gupta2011approximation}. In this problem the input consists of a graph
$G=(V,E)$ with positive edge weights, %
a set of \emph{terminals} $R\subseteq V$, and a set of
\emph{demands}~$D\subseteq\binom{R}{2}$. The objective is to select a subgraph
$F\subseteq G$, minimizing the total cost of selected edges, while ensuring that for every
demand pair $\{s,t\}\in D$, $s$ and $t$ are in the same connected component of
$F$.  Since edge weights are positive, it is easy to see that the optimal
solution is always a forest. The \SF problem finds many applications (see
surveys~\cite{ljubic2021solving, cheng2013steiner, voss2006steiner,
tang2020survey}), for example in telecommunication networks
(cf.~\cite{voss2006steiner}). 

Our goal in this paper is to reassess the complexity of this fundamental problem
from the point of view of parameterized complexity and approximation
algorithms.\footnote{We assume the reader is familiar with the
basics of parameterized complexity and approximation algorithms, such as the classes FPT and APX
and the definition of treewidth, as given in standard textbooks 
\cite{cygan2015parameterized,williamson2011design,feldmann2020survey}. 
We give full definitions of all parameters in \cref{sec:prelims}.}
In order to recall the context, it is helpful to compare \SF to the
even more well-studied \ST problem, which is the special case of \SF where all
terminals are required to be connected, i.e., $D=\binom{R}{2}$, and an optimal
solution is a tree. \ST was already included in Karp's seminal
list~\cite{karp1975computational} of NP-hard problems from the 1970s. From the
approximation point of view, \ST (and therefore \SF) is known to be
APX-hard~\cite{chlebik2008steiner}, but both problems admit constant factor
approximations in polynomial time for general input graphs, where the best
approximation factors known are~$\ln(4)+\eps<1.39$~\cite{byrka2013steiner}
and~$2$~\cite{agrawal1991trees, ravi1994primal}, respectively. Despite this
similarity, when considering graph width parameters the problems exhibit 
wildly divergent behaviors from the parameterized complexity point of
view: whereas \ST is FPT parameterized by standard structural parameters such
as treewidth and can in fact even be solved in single exponential
$2^{O(k)}n^{O(1)}$ time~\cite{bodlaender2015deterministic} when $k$ is the
treewidth, \SF is NP-hard on graphs of treewidth~$3$, as shown independently by
\citet{gassner2010steiner} and \citet{bateni2011approximation}.

\SF is therefore a problem that presents a dramatic jump in complexity in this
context, compared to \ST, as the hardness result on graphs of treewidth~$3$
rules out even an XP algorithm for parameter treewidth. One of the main
positive contributions of \citet{bateni2011approximation} was an algorithm
attempting to bridge this gap using approximation. In particular, they showed
that \SF admits an approximation scheme for graphs of treewidth~$k$, which
computes a $(1+\eps)$\hy{}approximation in $n^{O(k^2/\eps)}$ time for any $\eps>0$. Hence, if we
allow slightly sub-optimal solutions, we can at least place the problem in XP
parameterized by treewidth. In their paper, \citet{bateni2011approximation}
remark that because the exponent of the polynomial of this runtime depends on
$k$ and $\eps$, "it remains an interesting question for future research whether
this dependence can be removed", that is, whether a $(1+\eps)$-approximation
can be obtained in FPT time.

The main result of our paper is a positive resolution of the question of
\citet{bateni2011approximation}: we show that \SF\ admits an \emph{efficient
parameterized approximation scheme (EPAS)} for treewidth, that is, a
$(1+\eps)$-approximation algorithm with a runtime of the form
$f(k,\eps)n^{O(1)}$. In other words, we show that their algorithm can be
improved in a way that makes the running time FPT not only in the treewidth,
but also in $1/\eps$. More precisely, we show the following:

\begin{theorem}\label{thm:tw-EPAS}
The \SF problem admits an EPAS parameterized by the treewidth $k$ with a runtime of $2^{O(\frac{k^2}{\eps}\log\frac{k}{\eps})}\cdot n^{O(1)}$.
\end{theorem}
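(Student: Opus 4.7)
The plan is to upgrade the approximation scheme of Bateni, Hajiaghayi and Marx (BHM) from XP to FPT in $k$ by drastically shrinking the pool of candidate Steiner representatives over which their dynamic program searches. The bottleneck producing the $n^{O(k^2/\eps)}$ factor is that, at each bag $B_t$ of a nice tree decomposition of width $k$, the DP picks $O(k^2/\eps)$ ``representative'' Steiner vertices used to route paths across $B_t$, drawn from the entire vertex set. If this choice can be restricted to a precomputed candidate pool $C_t$ of size $\mathrm{poly}(k/\eps)$ while losing only an additional $(1+\eps)$ factor, the count becomes
\[
\bigl(\mathrm{poly}(k/\eps)\bigr)^{O(k^2/\eps)} \;=\; 2^{O(\frac{k^2}{\eps}\log\frac{k}{\eps})},
\]
which, together with the $k^{O(k)}$ bag-partitions and a polynomial overhead for merging children tables, matches the target bound.

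The first key step is a structural/approximation lemma: starting from any optimum Steiner forest $F^\ast$, construct a forest $F$ of cost at most $(1+\eps)\cdot\cost(F^\ast)$ such that, for every bag $B_t$, every maximal $F$-path whose endpoints lie in $B_t$ uses only ``canonical'' intermediate vertices belonging to an explicitly constructed set $C_t$ of size $\mathrm{poly}(k/\eps)$ that depends solely on $B_t$ and its neighbourhood. The natural construction of $C_t$ selects approximate shortest-path portals between pairs of vertices in $B_t$, quantized on a geometric distance scale of ratio $1+\Theta(\eps/n)$; the per-hop rounding then telescopes to an overall $(1+\eps)$ loss across the entire tree decomposition.

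The second key step is the algorithmic one. I would adapt the BHM dynamic program so that at each bag $B_t$ it enumerates (i) a partition of $B_t$ encoding the connectivity of $F$ in the already-processed subtree (at most $k^{O(k)}$ options), (ii) for each block of the partition, a short ``demand signature'' recording which still-unresolved pairs must be joined outside, and (iii) a tuple of $O(k^2/\eps)$ Steiner representatives drawn from $C_t$ rather than from $V(G)$. Combining tables from children reduces, exactly as in BHM, to a partition merge with an auxiliary matching step, executable in time FPT in $k$. Correctness follows from the structural lemma, and the runtime bound follows by multiplying the three factors above.

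The main obstacle is, unsurprisingly, the structural lemma. One must ensure simultaneously that (a) $C_t$ has size polynomial in $k/\eps$, (b) every near-optimal forest can be rerouted so that its segments inside $B_t$'s zone of influence pass only through $C_t$-vertices, and (c) the local rerouting is consistent across adjacent bags, so that a path winding through many bags of $T$ remains valid after all simultaneous modifications. Controlling the cumulative error globally by $(1+\eps)$ — via the geometric length discretization together with a carefully aligned portal-based choice of $C_t$ — is where the technical heart of the argument will lie.
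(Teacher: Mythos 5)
Your high-level aim -- shrink the pool from which BHM's dynamic program draws its guesses so that the count becomes $(\mathrm{poly}(k/\eps))^{O(k^2/\eps)}$ -- is the right one, and it is also the paper's aim. But the mechanism you propose has several concrete problems, and you are missing the ideas that make the argument actually close.

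First, your discretization does not give a small pool. Quantizing distances on a geometric scale of ratio $1+\Theta(\eps/n)$ over an instance whose aspect ratio is $\mathrm{poly}(n/\eps)$ produces $\Theta\bigl(\tfrac{n}{\eps}\log\tfrac{n}{\eps}\bigr)$ distance classes, so $|C_t|$ is polynomial in $n$, not in $k/\eps$; this reproduces BHM's $n^{O(k^2/\eps)}$ and makes no progress. The paper instead buckets by powers of $2$ (hence $O(\log\tfrac{n}{\eps})$ scales) and accepts that its per-bag partition $\zeta_B$ has size $\mathrm{poly}(k,1/\eps)\cdot\mathrm{polylog}(n)$, \emph{not} $\mathrm{poly}(k/\eps)$; it then removes the $\mathrm{polylog}(n)$ from the exponent via a Win/Win argument (if $k^2/\eps<\sqrt{\log n}$ then $(\log n)^{k^2/\eps}=n^{o(1)}$, otherwise $\log n\le k^4/\eps^2$). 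Without some such argument, a construction with any $\log n$ dependence in the pool size will not yield the claimed FPT bound, and a construction with no $\log n$ dependence is not what the paper achieves and is not obviously achievable. Also, the ``per-hop rounding telescopes to $(1+\eps)$'' reasoning is not how error is controlled here: the error analysis is a global charging argument over components of an optimal forest, not an accumulation of local rounding errors along paths.

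Second, you have misidentified where the hardness sits. The BHM DP for \SF does not bottleneck on choosing $O(k^2/\eps)$ Steiner-vertex path representatives; it bottlenecks on supplying, for each bag, a small family $\Pi_B$ of partitions of the \emph{active terminals} (of which there can be $\Omega(n)$) such that some near-optimal solution conforms to one of them. The object to compress is thus a partition of active terminals, not a set of path waypoints. The paper's tool is a \emph{refinement} partition $\zeta_B$ of $A_B$: two active terminals in the same block of $\zeta_B$ are guaranteed to lie in the same component of the near-optimal solution $F_\eps$, which then lets one choose $\delta$-net points per component from the $|\zeta_B|$ blocks rather than from $n$ terminals.

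Third, and most importantly, your structural lemma is asserted without the ingredient that makes it provable. Rerouting or merging components costs something, and the central difficulty is to show that no component gets charged too much. BHM handle this with an asymmetric ``merge into the higher component'' order and pay a factor $k$; that asymmetry is exactly what prevents their pool from shrinking. The paper's key new idea is to make the merging rule symmetric by introducing a second parameter, the \emph{height} $h$ of the tree decomposition, defining a level-restricted cost $\cost^\downarrow_B$, and charging the merges level by level so that each component pays at most $\eps\cdot\cost(C)$ across all $h$ levels. Combined with the Bodlaender--Hagerup rebalancing to $h=O(k\log n)$, this is what produces the $\mathrm{poly}(k,1/\eps,\log n)$ bound on $|\zeta_B|$. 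Your proposal never addresses overcharging, never uses the height of the decomposition, and never explains the consistency-across-bags point you yourself flag as ``the technical heart''; as written the lemma would not go through.
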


Moving on from treewidth, we ask what the most general
parameter is for which we may hope to obtain an FPT \emph{exact} algorithm for
\SF. We observe that the NP-hardness result of \citet{gassner2010steiner,
bateni2011approximation} for \SF on graphs of treewidth~$3$ actually has some
further implications for some even more restricted parameters: the graphs
constructed in their reductions also have constant \emph{treedepth} and
\emph{feedback vertex set} size, implying that the problem remains hard for
both of these parameters (which are incomparable in general). More
precisely, known reductions imply the following:

\begin{theorem}[\citet{gassner2010steiner,bateni2011approximation}]
The \SF problem is NP-hard on graphs of treewidth~$3$, treedepth~$4$, and feedback vertex set of size~$2$.
\end{theorem}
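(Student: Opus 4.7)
The plan is to revisit the existing NP\hy{}hardness reductions of \citet{gassner2010steiner} and \citet{bateni2011approximation}, both of which already establish hardness of \SF on graphs of treewidth~$3$, and to verify by direct inspection that the graphs they produce satisfy all three structural bounds simultaneously. Since the result is attributed to prior work, the task is not to design a new reduction but to read off the additional parameter bounds that the published constructions implicitly meet.

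At a high level, these reductions encode instances of a suitable NP\hy{}hard source problem (such as a restricted variant of 2\hy{}disjoint paths or 3\hy{}SAT) using small edge\hy{}weighted gadgets attached to a handful of ``hub'' vertices, together with carefully placed terminal demands. The critical feature is that all large cost gaps required in the analysis are realized through edge weights rather than through long subdivided paths, so each individual gadget has only a constant number of vertices and edges. This is what makes it possible to control all three parameters at once, since treedepth in particular rules out long induced paths.

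With this structure in hand, each of the three bounds is verified by a direct argument. For the feedback vertex set bound, identify two hub vertices whose removal leaves the (tree\hy{}shaped) interiors of the individual gadgets, yielding a feedback vertex set of size~$2$. For treewidth, exhibit a tree decomposition whose bags always contain these two hubs together with at most two additional vertices of a single gadget, giving width~$3$. For treedepth, build an elimination forest by placing the hubs at the top two levels and hanging the constantly\hy{}many interior vertices of each gadget below, arranged so that the overall depth is at most~$4$.

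The main obstacle is matching the exact stated constants: treewidth~$3$ already forces extremely simple gadgets, and treedepth~$4$ additionally forbids even moderately long internal paths. I would therefore examine the gadgets from the cited papers carefully and, where needed, contract any degree\hy{}two chains into single weighted edges to tighten the parameters without affecting the correctness of the reduction. Once the gadgets are confirmed to have the right constant shape, the three bounds follow from a routine gadget\hy{}by\hy{}gadget inspection, and the theorem is obtained.
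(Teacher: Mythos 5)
Your proposal is correct and is essentially the paper's (implicit) argument: the paper does not give a new reduction but merely observes that the constructions of \citet{gassner2010steiner} and \citet{bateni2011approximation} already produce graphs with two hub vertices whose removal leaves a forest of constant-depth gadgets, from which treewidth~$3$, treedepth~$4$, and feedback vertex set~$2$ are read off directly. Your remark about contracting degree-two chains into weighted edges is exactly the kind of cosmetic normalization that makes the treedepth bound tight without affecting correctness.
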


This leads us to consider even more restricted parameters, such as the size of
a \emph{vertex cover} and \emph{feedback edge set}, which are not bounded in
this reduction. Indeed, not only do we prove that \SF is FPT for both of these
parameters, but we are also able to determine the correct parameter dependence,
under the Exponential Time Hypothesis (ETH). For feedback edge set the optimal dependence is single
exponential:  

\begin{theorem}\label{thm:fes}
The \SF problem is FPT parameterized by the size $k$ of a feedback edge set and can be solved in $2^{O(k)}n^{O(1)}$ time.
Furthermore, no $2^{o(k)}n^{O(1)}$ time algorithm exists, under ETH.
\end{theorem}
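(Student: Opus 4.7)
The plan is to exploit the fact that, up to the $k$ extra feedback edges, $G$ is a tree. I would first compute a spanning forest $T$ of $G$ and let $F := E(G) \setminus E(T)$, so $|F| \le k$. A standard preprocessing phase deletes non-terminal leaves and contracts non-terminal degree-2 vertices iteratively; afterwards every non-terminal non-$F$-endpoint has degree at least three, which already implies that the number of such ``branching'' vertices is $O(k)$. I would then enumerate all $2^k$ subsets $F' \subseteq F$, interpreting each $F'$ as the set of feedback edges used by the optimum solution; for each guess, pay $w(F')$ and reduce to the subproblem of finding a minimum-weight $S \subseteq E(T)$ such that $(V, F' \cup S)$ satisfies every demand pair. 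The global optimum is then the minimum over all $2^k$ guesses of $w(F') + w(S)$.

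The subproblem for a fixed $F'$ would be solved via a bottom-up dynamic program on the tree $T$, whose state records, for each $F$-endpoint present in the current subtree, which other $F$-endpoints it is currently connected to in the partial solution, together with which demand pairs from the subtree are still ``open''. Naively the Bell number on up to $2k$ such endpoints yields $2^{O(k \log k)}$ states; the main technical challenge I foresee is to bring this down to $2^{O(k)}$ by a more structured state representation---for instance, by tracking only binary ``alive/resolved'' information per $F$-endpoint and proving (via an exchange argument) that the only partitions arising in an optimum can be encoded in $O(k)$ bits, so that the relevant states number $2^{O(k)}$. Combined with the $2^k$ outer enumeration, this yields the desired $2^{O(k)} n^{O(1)}$ total running time.

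\textbf{Lower bound.} For the matching ETH lower bound, I would give a polynomial reduction from 3-SAT that preserves the parameter linearly. By the sparsification lemma, we may assume the input formula $\phi$ has $n$ variables and $m = O(n)$ clauses, so that under ETH $\phi$ admits no $2^{o(n)}$-time algorithm. The construction would attach to each variable $x_i$ a constant-size gadget---for instance, two parallel paths joined at their endpoints into a small cycle, contributing a single $F$-edge---so that the choice of which of the two paths enters the solution encodes the truth value of $x_i$. Each clause would be implemented by a small sub-gadget containing a new demand pair whose two endpoints are connected in the solution precisely when at least one of the clause's three literal-paths has been chosen. The resulting \SF instance has polynomial size and feedback edge set of size $O(n + m) = O(n)$, so a $2^{o(k)} n^{O(1)}$-time algorithm for \SF would yield a $2^{o(n)}$-time algorithm for 3-SAT, contradicting ETH. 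The hardest part here will be designing the clause gadget so that it contributes only $O(1)$ feedback edges while faithfully capturing a disjunction of three literals without introducing spurious low-cost solutions.
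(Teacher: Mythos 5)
Your \textbf{upper bound} takes a genuinely different route from the paper, but it has two gaps that are not easily closed. The paper does not fix a spanning forest and guess a subset of feedback edges; instead, after eliminating degree-1 vertices, it identifies $O(k)$ \emph{topological edges} (maximal paths between vertices of degree $\ge 3$ or endpoints of feedback edges) and guesses, for each, whether it is \emph{fully} contained in the optimum. This single guess determines the connectivity of all special vertices at once, and the remaining decisions on non-fully-used topo-edges are resolved by local reduction rules plus a polynomial-time reduction to \pname{Min Cut}. Your proposal instead guesses only the $\le k$ feedback edges and then runs a tree DP. The first gap you already flag: your natural state is a partition of $O(k)$ boundary endpoints, giving $2^{O(k\log k)}$, and the sketched ``exchange argument'' encoding partitions in $O(k)$ bits is not substantiated (in general $k$-element partitions require $\Theta(k\log k)$ bits, and no structural reason is given why only $2^{O(k)}$ of them can occur at a fixed tree node). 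The second gap is more serious and you do not flag it: the state also ``records which demand pairs from the subtree are still open.'' After your preprocessing the graph can still contain $\Omega(n)$ terminals of degree $2$ on the tree paths, hence $\Omega(n)$ active demands at an internal node; tracking which are open (or, equivalently, which boundary-component each active terminal belongs to) is not a function of $k$ alone. This is precisely the obstruction that makes \SF NP-hard already at treewidth~$3$, so the DP as described does not obviously have bounded state space at all. The paper's topo-edge enumeration is designed exactly to sidestep this: once the set of fully-used topo-edges is fixed, every active terminal sitting inside a non-fully-used topo-edge has essentially one or two sensible ways to be routed, and the residual optimisation over a bundle of parallel paths between two components reduces to a single \pname{Min Cut} computation rather than a DP over partitions of active terminals.

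Your \textbf{lower bound} is in the right spirit (sparsification lemma plus a linear-size reduction preserving $k$ linearly), but it is needlessly laborious. You propose designing bespoke variable- and clause-gadgets and yourself note that making the clause gadget use only $O(1)$ feedback edges without spurious solutions is ``the hardest part.'' The paper avoids this entirely: it notes that after sparsification and the standard occurrence-reduction trick (each variable appears at most three times), the classical NP-hardness reduction for \ST already produces an instance with $|E|=O(n)$ and hence feedback edge set $k<|E|=O(n)$. No new gadgets are needed, and the bound even holds for the special case \ST, which is slightly stronger.
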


For the parameterization by the vertex cover size, we obtain a slower runtime
for our FPT algorithm. Interestingly, we are also able to prove that this is
best possible, under~ETH. Our lower bound for \SF is in contrast to the \ST
problem, for which a faster $2^{O(k)}n^{O(1)}$ time algorithm exists, even if
$k$ is the treewidth~\cite{bodlaender2015deterministic}.

\begin{theorem}\label{thm:vc}
The \SF problem is FPT parameterized by the size $k$ of a vertex cover and can be solved in $2^{O(k\log k)}n^{O(1)}$ time. Furthermore, no $2^{o(k\log k)}n^{O(1)}$ time algorithm exists, under ETH.
\end{theorem}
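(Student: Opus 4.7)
The plan is to split the theorem into the algorithmic upper bound and the matching ETH lower bound. For the upper bound, the central observation is that any optimal \SF solution $F$ induces a partition of the vertex cover $C$ (with $|C|=k$), namely the partition whose classes are the sets $T\cap C$ for $T$ ranging over the connected components of $F$. Since there are only $B_k = 2^{O(k \log k)}$ set partitions of $C$, the strategy is to enumerate all of them, and for each candidate partition $\mathcal{P} = \{C_1,\ldots,C_t\}$ consistent with the demand pairs contained in $C$, compute the cheapest subgraph in which each $C_i$ becomes a single connected component. The crucial structural fact is that $I = V\setminus C$ is an independent set, so any potential Steiner vertex $v\in I$ contributes at most a ``star'' on some subset $S \subseteq N(v)\cap C_i$ at cost $\sum_{u\in S} w(v,u)$, merging all of $S$ into one component. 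For each part this gives a Steiner-tree-style instance on at most $|C_i|\le k$ terminals, solvable in $2^{O(|C_i|)} n^{O(1)}$ time by a Dreyfus--Wagner-style dynamic program; summing over parts and over partitions then yields the target runtime $B_k \cdot 2^{O(k)} n^{O(1)} = 2^{O(k \log k)} n^{O(1)}$.

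The main obstacle is handling demand pairs with both endpoints in $I$, and particularly ``floating'' demand components whose vertices all lie in $I$ and whose target part $C_i$ must be jointly chosen even though such a component may contain arbitrarily many vertices. My plan is to embed the whole computation into one dynamic program whose state is a partition of $C$ together with, for the duration of processing a single floating component, one extra $O(\log k)$-bit field identifying the sub-part to which that component has been committed. The DP processes the edges of $G[C]$, the non-terminal vertices of $I$, the anchored $I$-terminals, and the vertices of each floating demand component (as one consecutive block) one item at a time, and at each transition it enumerates the at most $2^k$ subsets $S\subseteq N(v)\cap C$ that are contained in a single sub-part, which is enough to capture both leaf attachments of $I$-terminals and multi-edge Steiner contributions. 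A further subtlety is that the per-part Steiner-tree computations may appear to reuse the same non-terminal $v\in I$ across several parts; this is harmless because such a ``reuse'' corresponds to an \SF solution for the coarser partition obtained by merging the affected parts, and that partition is enumerated separately, so taking the minimum over all $\mathcal{P}$ still produces the correct optimum.

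For the lower bound the plan is to reduce from a problem known to require $2^{\Omega(k \log k)} n^{O(1)}$ time under ETH, with the parameter $k$ scaling linearly in the vertex cover of the produced \SF instance. A natural source is \pname{Chromatic Number} parameterized by vertex cover, or equivalently a suitably chosen \pname{List Coloring} or \pname{Channel Assignment} instance, since the $k^{\Omega(k)}$ lower bound for such problems directly mirrors the $B_k$ enumeration used in the algorithm above. The reduction would place the $O(k)$ ``slot'' vertices of the source instance into the vertex cover of the constructed graph, while all label-choice and propagation gadgets live in the independent set, and demand pairs together with tightly balanced edge weights force the partition of $C$ induced by any optimal \SF solution to encode a valid answer to the source instance. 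The delicate part is tuning the gadgets so that the $k^{\Omega(k)}$ valid assignments of the source problem correspond to $k^{\Omega(k)}$ distinct partitions of $C$ with pairwise distinguishable optimal costs; once this is achieved, a hypothetical $2^{o(k \log k)} n^{O(1)}$ algorithm for \SF would translate into a $2^{o(k \log k)} n^{O(1)}$ algorithm for the source problem, contradicting ETH.
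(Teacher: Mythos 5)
Your upper bound takes a genuinely different route from the paper. The paper avoids designing a new dynamic program altogether: it builds a tree decomposition of width $k$ whose bags are essentially the vertex cover $S$ plus one vertex of $I$, groups the $I$-terminals by the connected components of the demand graph, and observes that for each bag $B$ the active terminals all lie in a single group, so the trivial partition $\{A_B\}$ suffices. This produces partition sets $\Pi_B$ of total size $O(n)$ and lets the algorithm fall out of \cref{thm:tw-dp} as a black box. Your plan instead enumerates all $B_k$ partitions of $C$ directly and solves per-part Steiner-tree-style subproblems, which is closer in spirit to the independent dynamic program of Bodlaender et al.\ mentioned in the paper. That route can work, but as written it has gaps: the ``reuse is harmless'' claim needs an actual argument (you implicitly rely on the fact that the per-part lower bound is always realized by \emph{some} coarser enumerated partition, but you never verify that the reported cost is achievable by a feasible forest consistent with \emph{that} coarser partition without further structural constraints); and the ``floating demand component'' DP, which is precisely the place the paper's grouping observation does all the work, is only sketched. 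You would need to pin down how a floating component with $\Omega(n)$ vertices, each with its own neighborhood in $C$, is committed to a single part in a way that interacts correctly with all other parts and all other floating components simultaneously.

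The lower bound is the more serious gap. The paper reduces directly from \textsc{3-SAT}, not from another parameterized problem, and the whole difficulty is the compression: encoding $n$ Boolean variables into a \SF instance with vertex cover only $O(n/\log n)$, which is achieved by choice gadgets in which a single demand $\ell^i_j\to r^i_j$ selects one of $\sqrt n$ middle vertices and thereby fixes $\Theta(\log n)$ variable values, while the $\Theta(n)$ clause-checking vertices are kept in the independent set. Your proposal defers all of this to an unspecified source problem (``Chromatic Number, or equivalently a suitably chosen List Coloring or Channel Assignment instance'') and an unspecified gadget construction. This is not a proof: you have not exhibited a source problem with a known $2^{\Omega(k\log k)}$ lower bound for a parameter that transfers linearly to the vertex cover of the constructed \SF instance, and you have not given the gadgets. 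In fact it is far from clear that \pname{Chromatic Number} parameterized by vertex cover is the right source (the required slightly-superexponential lower bound would itself have to be established or cited for that specific parameterization). As stated, the lower-bound half of the proposal is a research plan rather than an argument, and the compression idea that makes the paper's reduction work is missing entirely.
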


We remark that \citet{bodlaender2023complexity} recently independently showed 
that \SF admits a~$2^{O(k\log k)}n^{O(1)}$ time algorithm for the size $k$ of a 
vertex cover (improving an algorithm for the unweighted version of the problem 
given in~\cite{GimaHKKO22}). While they develop their own dynamic program to 
solve this problem, we rely on an existing algorithm 
by~\cite{bateni2011approximation} (see \autoref{thm:tw-dp}). Accordingly our 
description of the algorithm is very short compared 
to~\cite{bodlaender2023complexity}. The more interesting part of 
\autoref{thm:vc} however is the proof of the lower bound.

\subsection{Overview of Techniques}

Let us briefly sketch the high level ideas of our results given by \cref{thm:tw-EPAS,thm:vc,thm:fes}.

\smallskip
\textbf{EPAS for treewidth.} Our algorithm extends
the work of \cite{bateni2011approximation}, so let us briefly recall some key
ideas. 
Given a rooted tree decomposition, a terminal $t$ is called \emph{active} for a bag $B$
if there is a demand $\{s,t\}\in D$ such that $t$ lies in the sub-tree rooted at $B$
while $s$ does not (see \cref{sec:prelims} for formal definitions). 
It is a standard property of tree decompositions that every bag is a separator.
Hence the
component of any feasible solution that contains an active terminal must intersect~$B$. 
The hardness of the problem now inherently stems from the fact
that we have to decide for all active terminals of a bag, how
the corresponding component intersects the bag, and therefore how the active
terminals (whose number is unbounded by $k$) are partitioned into connected
components.  Suppose, however, that someone supplied us with this information,
that is, suppose that for each bag $B$ we are given a set of partitions $\Pi_B$
of its active terminals and we are promised that the optimal solution
\emph{conforms} to all $\Pi_B$. By this we mean that if we look at how the
optimal solution partitions the active terminals of $B$ into connected
components and call this partition $\pi$, then $\pi\in \Pi_B$, that is, the
optimal partition is always one of the supplied options. In this case, using
this extra information, the problem does become tractable, as shown in~\cite{bateni2011approximation}:

\begin{theorem}[\citet{bateni2011approximation}]\label{thm:tw-dp} For an input
graph $G$ on $n$ vertices, let a rooted nice tree decomposition of width~$k$ be
given, such that all terminals lie in bags of leaf nodes of the decomposition.
Also, let a set~$\Pi_B$ of partitions of the active terminals of each bag $B$ of
the decomposition be given. If $p=\sum_B |\Pi_B|$ is the total number of
partitions, then a minimum cost \SF solution conforming to all $\Pi_B$ can be
computed in $2^{O(k\log k)}\cdot(pn)^{O(1)}$ time.  
\end{theorem}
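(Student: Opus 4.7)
The plan is a bottom-up dynamic program on the nice tree decomposition. For each node $x$ with bag $B_x$ of size at most $k+1$ and each promised partition $\pi \in \Pi_{B_x}$ with classes $P_1, \dots, P_r$, the algorithm maintains a table indexed by partitions $\sigma$ of $B_x \cup \{P_1, \dots, P_r\}$. The cell $A[x,\pi,\sigma]$ stores the minimum weight of an edge set $F$ in the sub-tree rooted at $x$ that is consistent with $(\pi,\sigma)$, where consistency requires: two elements of $B_x$ lie in the same block of $\sigma$ iff they lie in the same connected component of $F$ (with each isolated bag vertex forming a singleton component); a class symbol $P_i$ lies in the same block as a vertex $b\in B_x$ iff $b$ is connected in $F$ to some terminal of $P_i$ that has already been introduced in the sub-tree; every block containing some $P_i$ also contains a vertex of $B_x$, so that the corresponding partial component can still be extended upward; and every demand $\{s,t\}$ with both endpoints in the sub-tree and neither currently active is already joined by $F$.

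Transitions follow the standard four cases of a nice tree decomposition. Leaf bags, which by assumption contain only terminals, are initialized directly. At introduce-vertex and introduce-edge nodes, $\sigma$ is updated according to whether an added edge merges two blocks. At a forget node removing a vertex $v$, we verify that the block containing $v$ retains at least one other element (so that no live component is orphaned), and if $v$ is a terminal belonging to class $P_i$ of the parent bag's $\pi$, we additionally merge the block containing $v$ with the block containing $P_i$, thereby committing $v$ to the correct class. At a join node with common child bag $B$ and common $\pi$, the two child states are combined by taking the least common coarsening of their $\sigma$ partitions, the standard technique for connectivity DPs on tree decompositions, while checking that no unwanted cycle is formed.

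A key structural fact bounds the state size: since distinct classes $P_i$ correspond to distinct components of the final solution, they must lie in distinct blocks of $\sigma$, and each such block must by definition contain a bag vertex. Hence for any feasible $\pi$ we have $r\le |B_x|\le k+1$, and $\pi$ with more classes can be discarded immediately. Consequently, $\sigma$ partitions a ground set of size at most $2k+2$, yielding $2^{O(k\log k)}$ admissible partitions per pair $(x,\pi)$. Summing over all nodes $x$ and all $\pi\in\Pi_{B_x}$ gives at most $p\cdot 2^{O(k\log k)}$ cells, and each transition (including the pairwise combination performed at join nodes) takes $2^{O(k\log k)}\cdot n^{O(1)}$ time, giving the overall runtime $2^{O(k\log k)}\cdot (pn)^{O(1)}$.

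The main obstacle is correctly managing the forget-of-terminal and join operations: in both, the class symbols of $\pi$ must be interpreted consistently so that the partition induced on the active terminals at each bag is exactly a refinement of the promised $\pi$. Once this bookkeeping is in place, correctness follows by a standard induction on the decomposition, matching each optimal forest solution conforming to all $\Pi_B$ with a unique sequence of DP states of the same total cost.
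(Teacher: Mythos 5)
This theorem is cited from \citet{bateni2011approximation} and the present paper gives no proof of it, so there is no in-paper argument to compare against; I can only assess your reconstruction on its own merits. The high-level plan (a connectivity DP over the nice tree decomposition whose state couples a partition of the bag with ``class symbols'' for the at most $|B|$ parts of the promised $\pi$, giving $2^{O(k\log k)}$ states per pair $(x,\pi)$ and hence $2^{O(k\log k)}(pn)^{O(1)}$ overall) is the right shape, and your observation that $|\pi|\le|B_x|$ so the ground set has size at most $2k+2$ is exactly the bound the paper itself records after its definition of conforming.

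However, the state as you define it is not well-defined, and this is a genuine gap rather than a notational slip. You require $\sigma$ to be a \emph{partition} of $B_x\cup\{P_1,\dots,P_r\}$ and stipulate that $P_i$ shares a block with $b\in B_x$ iff $b$ is connected in the partial forest $F$ to some terminal of $P_i$. But in a partial solution the terminals of a single class $P_i$ need not yet lie in one component of $F$: they are only promised to be joined in the \emph{final} solution, possibly by edges introduced far above $x$ (the join node is where this typically happens, since $A_{B}$ for the join node can be a strict subset of $A_{B_1}\sqcup A_{B_2}$ and the two children's components are disjoint before merging). Thus $P_i$ can be adjacent, in the sense of your condition, to bag vertices lying in \emph{different} blocks of $\sigma|_{B_x}$, and there is no single block of a partition in which to place the symbol $P_i$. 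The standard fix is to replace the class symbols with a labelling $\lambda:\text{blocks}(\sigma|_{B_x})\to\{0,1,\dots,r\}$ recording which class (if any) each partial component is committed to; multiple blocks may share a nonzero label (they must eventually merge), while blocks with distinct nonzero labels must never merge. This keeps the state count at $2^{O(k\log k)}$ and makes the forget/introduce/join transitions sound. Relatedly, at a join node the two children generally have different active-terminal sets and hence different promised partitions $\pi_1\in\Pi_{B_1}$ and $\pi_2\in\Pi_{B_2}$; your phrase ``common $\pi$'' is incorrect, and the transition must explicitly reconcile $\pi$, $\pi_1$, $\pi_2$ (and their labellings) rather than assume they coincide.
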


The above theorem does not seem immediately helpful, since one would still need to
find a small collection of partitions $\Pi_B$ in order to obtain an efficient algorithm. 
Note however, that the partition sets may conform to an approximate solution as well, which
would let the algorithm compute a solution that it at least as good.
The strategy of \cite{bateni2011approximation} therefore is to
construct a collection of partitions that has size polynomial in~$n$ (when $k,\eps$ are
fixed constants) by stipulating that when two active terminals are ``close'' to
each other, they should belong in the same set of the partition of some near-optimal solution. In order to
bound the resulting approximation ratio, they need to provide a
charging scheme: starting from an optimal solution, they merge components which are
``close'', to obtain a solution that conforms to the~$\Pi_B$ used by the algorithm. 
They then show that the resulting solution is still near-optimal by charging the extra
cost incurred by a merging operation to one of the two merged components.

A blocking point in the above is that we need to make sure we do not
``overcharge'' any component. This is accomplished in~\cite{bateni2011approximation}
via a partial ordering of the components: we
order the components according to the highest bag of the rooted tree decomposition they
intersect, and whenever two components are merged we charge this to the
\emph{lower} component. As shown in~\cite{bateni2011approximation},
this ensures that no component is charged for more than
$k$ merges. Unfortunately, this also implies that the merging procedure is not
symmetric, which severely diminishes the contexts in which we can apply it.

Let us now describe how our approach improves upon this algorithm. A key
ingredient  will be a more sophisticated charging scheme, which will allow us
to obtain a better (smaller) collection of partitions~$\Pi_B$, without
sacrificing solution quality.  Counter-intuitively, we will achieve this by
introducing a \emph{second} parameter: the height $h$ of the tree
decomposition.  Informally, we will now construct a near-optimal solution by
merging two components whenever the connection cost is low compared to the cost of
(a~part~of) \emph{either} component (as opposed to the lower component). As 
in~\cite{bateni2011approximation}, this runs the risk of charging many merging
operations to a higher component, but by performing an accounting by tree
decomposition level and using the fact that the decomposition only has $h$
levels, we are able to show that our solution is still near-optimal even though
we merge components much more aggressively than~\cite{bateni2011approximation}.
In this way, for each bag we construct one partition of its active terminals into
a number of sets that is polynomial in $k+h+\frac{1}{\eps}+\log n$, in a way
that guarantees that this partition is a \emph{refinement} of a near-optimal
solution. That is, whenever we decide to place two terminals together in our
partition, the near-optimal solution does the same. However, this solution does
not necessarily conform to the resulting partitions, as two terminals of the same component
might end up in different sets of the partition for a bag.

At this point an astute reader may be wondering that since we consider both the
width $k$ and the height $h$ of the decomposition as parameters, we are
effectively parameterizing by treedepth, rather than treewidth. This is
correct, but we then go on to invoke a result of \citet{BodlaenderH98} which
states that any tree decomposition can be rebalanced to have height $O(\log n)$
without severely increasing its width. Hence, the family of partitions we now
have has size polynomial in $k+\frac{1}{\eps}+\log n$. 
However, we are not done yet, since at this point we can
only guarantee that our partitions are refinements of a near-optimal
solution. To complete the algorithm, we work from this family of partitions to
obtain a collection of partitions \emph{conforming} to our near-optimal
solutions using $\delta$-nets (this is similar to the approach of
\cite{bateni2011approximation}). This leads to a running time of the form
$(\log n)^{O(\frac{k^2}{\eps})}n^{O(1)}$, which by standard arguments of parameterized
complexity is in fact FPT and can be upper-bounded by a function of the form
$2^{O(\frac{k^2}{\eps}\log\frac{k}{\eps})}\cdot n^{O(1)}$. 
To summarize, our high-level strategy is to
show that the approach of \cite{bateni2011approximation} can be significantly
improved when the input decomposition has small width and height, but then we
observe that our new scheme is efficient enough in the height that even if we
replace $h$ by a bound that can be obtained for \emph{any} graph, we still have
an algorithm with an FPT running time, that is, significantly faster than that
of \cite{bateni2011approximation}. 

\smallskip
\textbf{Vertex Cover.} For the parameterization by the vertex cover size, as
mentioned we obtain an FPT exact algorithm with dependence $2^{O(k\log k)}$. A
similar algorithm was recently independently obtained by
\citet{bodlaender2023complexity} via dynamic programming. However, our
algorithm is significantly simpler, because our strategy is to show how to
construct a tree decomposition and a collection of partitions $\Pi_B$ such that
we only need \emph{one} partition of the active terminals for each bag. As a
consequence, $p=O(n)$ and \autoref{thm:tw-dp} implies the algorithm of
\autoref{thm:vc}, without the need to formulate a new dynamic program.

Our main result for this parameter is that under ETH the runtime dependence
is asymptotically optimal. Note that this also implies that the runtime of the dynamic program
given by \autoref{thm:tw-dp} cannot be improved with regards to the dependence
on the treewidth. To show this, we present a reduction from \textsc{3-SAT}, where the
goal is to compress an $n$-variable formula into a \SF
instance such that the graph has vertex cover size $O(n/\log n)$. The intuition on why
it is possible to achieve such a compression is the following: suppose we have
an instance with vertex cover of size $k$ and a demand between two vertices of
the independent set. Then the simplest way to satisfy such a demand is to
connect both vertices to a common neighbor in the vertex cover. This encodes a
choice among $k$ vertices, and hence it is sufficient to encode the assignment for
$\log(k)$ binary variables. The strategy of our reduction is to set up some
choice gadgets which allow us to encode the assignments to the original formula
taking advantage of the fact that each choice can represent a logarithmic
number of variables. Hence we can obtain a construction of slightly sub-linear
($O(n/\log n)$) size.  We then of course need to add some verification gadgets,
representing the clauses, to check that the formula is indeed satisfied. But
even though the number of such gadgets is linear in $n$, we make sure that they
form an independent set, and hence the total vertex cover size remains
sufficiently small to obtain our lower bound. We note that this compression
strategy is similar to techniques recently used to obtain slightly
super-exponential lower bounds for vertex cover for other problems
\cite{LampisMV23,LampisV23}, but the constructions we use are new and tailored to
\SF.

\smallskip
\textbf{Feedback Edge Set.} For the parameterization by the size $k$ of a
feedback edge set, instead of relying on the dynamic program given by
\autoref{thm:tw-dp} we go an entirely different route in order to obtain the
faster $2^{O(k)}n^{O(1)}$ time FPT algorithm of \autoref{thm:fes}. First off,
it is not hard to reduce the \SF problem to an instance in which all vertices
have degree at least~$2$.  We then consider paths with internal vertices of
degree $2$, with endpoints that are vertices incident to the feedback edge set
or vertices of degree at least~$3$.  We call these paths \emph{topo-edges} and
argue that there are only $O(k)$ of these. We then guess for which topo-edges
the two endpoints lie in different components of the optimal \SF solution,
which can be done in $2^{O(k)}$ time.  If a topo-edge has both its endpoints in
the same component of the optimum, we show that it can be easily handled. For
the remaining topo-edges, we can decide which edges along the path do not
belong to the optimal solution by a reduction to the polynomial-time solvable
\pname{Min Cut} problem.

\subsection{Related work}
\label{sec:related}

\citet{bateni2011approximation} show that one of the consequences of their XP
approximation scheme is a PTAS for \SF on planar graphs, by using the common
technique pioneered by \citet{baker1994approximation} of reducing this problem
to graphs for which the treewidth is bounded as a function of~$\eps$.  Because
their algorithm is not FPT, their PTAS has a running time of the form
$n^{f(\eps)}$. By using our algorithm from \autoref{thm:tw-EPAS} we can improve
this runtime to~$f(\eps)n^{O(1)}$, i.e., we obtain on EPTAS for planar graphs.
However, \citet{eisenstat2012efficient} already showed that a
$(1+\eps)$-approximation algorithm with a runtime of $O(f(\eps)\cdot n\log^3
n)$ exists for \SF on planar graphs. While they build on the work of
\citet{bateni2011approximation}, and in particular also reduce to graphs of
treewidth bounded as a function of~$\eps$, interestingly they do not obtain an
EPAS parameterized by treewidth.  Instead they use a different route and show
that given a graph~$H$ of treewidth~$k$, in $O(f(k,\eps)\cdot n\log^2 n)$ time
it is possible to compute a \SF solution in $H$ whose cost is at most
$\cost(F^\star)+\eps\cost(H)$, i.e., there is an additive error that depends on
the cost of~$H$ compared to the optimum solution~$F^\star$. If the input
graph~$G$ is planar, then a result by \citet{borradaile2009n} implies that from
$G$ a so-called \emph{banyan}~\cite{rao1998approximating, bartal2021near} can
be computed, which is a subgraph of $G$ with cost bounded by
$O(g(\eps)\cost(F^\star))$, and which contains a near-optimal approximation of
every Steiner forest (cf.~\cite[Lemma~2.1]{eisenstat2012efficient}). By
applying the framework of \citet{bateni2011approximation} on the banyan instead
of the input graph, it is then possible to obtain a graph $H$ of treewidth
bounded by a function of~$\eps$, for which the algorithm of
\citet{eisenstat2012efficient} computes a $(1+O(\eps))$-approximation for the
input. 

If it would be possible to compute a banyan for bounded treewidth graphs, then the algorithm of \citet{eisenstat2012efficient} would also imply an EPAS for treewidth. However, to the best of our knowledge, and as explicitly stated by \citet{bartal2021near}, banyans are only known for planar graphs~\cite{eisenstat2012efficient, borradaile2009n}, Euclidean metrics~\cite{rao1998approximating}, and doubling metrics~\cite{bartal2021near} (in fact, the latter are so-called \emph{forest banyans}, which have weaker properties). Thus it is unclear how to obtain an EPAS for \SF parameterized by the treewidth via the algorithm of \citet{eisenstat2012efficient}. We leave open whether a banyan exists for bounded treewidth graphs, which could give an alternative algorithm to the one given in \autoref{thm:tw-EPAS}. However, a further remark is that the cost of the banyan for planar graphs obtained by \citet{borradaile2009n} has exponential dependence on~$1/\eps$, which implies a double exponential runtime dependence on~$1/\eps$ for the EPTAS for planar graphs. If a banyan can be obtained for bounded treewidth graphs by generalizing the techniques of \citet{borradaile2009n} to minor-free graphs, then the resulting EPAS parameterized by treewidth would also have double exponential runtime in~$1/\eps$. In this case however, our EPAS given by \autoref{thm:tw-EPAS} would be exponentially faster.

A different parameter that is often studied in the context of Steiner problems 
is the number $p=|R|$ of terminals. The classic result of 
\citet{dreyfus1971steiner} presents an FPT algorithm for \ST with a runtime of 
$3^p n^{O(1)}$. For unweighted graphs, this was 
improved~\cite{bjorklund2007fourier, nederlof2009fast} to $2^p n^{O(1)}$, while 
the fastest known algorithm for weighted graphs can compute the optimum in 
$(2+\eps)^p n^{O(\sqrt{\frac{1}{\eps}}\log\frac{1}{\eps})}$ 
time~\cite{fuchs2007dynamic} for any~$\eps>0$. The algorithm of 
\citet{dreyfus1971steiner} can be generalized to solve \SF in $2^{O(p)} 
n^{O(1)}$ time (cf.~\cite{chitnis2021parameterized}). A somewhat dual parameter 
to the number of terminals is the number~$q$ of non-terminals (so-called 
\emph{Steiner vertices}) in the optimum solution. For this parameter, a folklore 
result states that \ST (and thus also \SF) is 
W[2]-hard~(cf.~\cite{cygan2015parameterized, dvorak2021parameterized}). However, 
an EPAS with a runtime of $2^{O(q^2/\eps^4)}n^{O(1)}$ was shown to exist for 
\ST~\cite{dvorak2021parameterized}. For \SF it is not hard to see that such an 
EPAS parameterized by $q$ cannot exist unless P=NP 
(cf.~\cite{dvorak2021parameterized}), but if $c$ denotes the number of 
components of the optimum solution, there is an EPAS with a runtime of 
$(2c)^{O((q+c)^2/\eps^4)}n^{O(1)}$~\cite{dvorak2021parameterized}. Similar 
results have been found for related Steiner problems in directed 
graphs~\cite{chitnis2021parameterized}. For further results in the area of 
parameterized approximations, we refer to the survey~\cite{feldmann2020survey}.

\section{Preliminaries}\label{sec:prelims}

As mentioned, we assume the reader is familiar with the basics of parameterized
complexity, such as the class FPT~\cite{cygan2015parameterized}, and approximation algorithms
such as a PTAS~\cite{williamson2011design}. A
\emph{parameterized approximation scheme (PAS)} is an algorithm that computes a
$(1+\eps)$-approximation for a problem in  $f(k,\eps)n^{g(\eps)}$ time for some
functions $f$ and $g$, while an an \emph{efficient parameterized approximation
scheme (EPAS)} is a $(1+\eps)$-approximation algorithm running in time
$f(k,\eps)n^{O(1)}$ (that is, the running time is FPT in $k+\frac{1}{\eps}$).
The distinction between a PAS and an EPAS is similar to the one between a PTAS
and an EPTAS.

By $w:E\to\mathbb{R}^+$ we denote an edge-weight function, so that the cost of a solution $F$
to the \SF problem is $\cost(F)=\sum_{e\in E(F)} w(e)$. 
We will use $F^\star$ to denote an optimal solution, and for
$\alpha\ge 1$ we will say that a solution $F$ is $\alpha$-approximate if
$\cost(F)\le \alpha\cost(F^\star)$. For $u,v\in V$ we use $\dist(u,v)$ to
denote the shortest-path distance from $u$ to $v$ in $G$ according to the weight function~$w$.

\begin{definition} Given a graph $G=(V,E)$, a \emph{tree decomposition} is a
pair $(T,\{B_i\}_{i\in V(T)})$, where $T$ is a tree and each node $i\in V(T)$
of the tree is associated with a \emph{bag} $B_i\subseteq V$, with the
following properties: 
\begin{enumerate}
\item $\bigcup_{i\in V(T)} B_i=V$,
i.e., all vertices of $G$ are covered by the bags, 
\item for every edge $uv\in
E$ of $G$ there exists a node $i\in V(T)$ of the tree for which $u,v\in B_i$,
and 
\item for every vertex $v\in V$ of $G$ the nodes $\{i\in V(T)\mid v\in
B_i\}$ of the tree for which the bags contain $v$ induce a (connected) subtree
of $T$.  
\end{enumerate} 
The \emph{width} of the tree decomposition is
$\max_{i\in V(T)}\{|B_i|-1\}$ and the \emph{treewidth} of $G$ is the minimum
width over all its tree decompositions.  

A rooted tree decomposition is \emph{nice} if for every $i\in V(T)$ we have one of the
following: 

\begin{enumerate} 

\item $i$ has no children\footnote{here we do not demand the leaf nodes to be
empty, as is often assumed for this definition.} ($i$ is a \emph{leaf node}),

\item $i$ has exactly two children $i_1$ and $i_2$ such that
$B_i=B_{i_1}=B_{i_2}$ ($i$ is a \emph{join node}), 

\item $i$ has a single child $i'$ where $B_i=B_{i'}\cup \{v\}$ for some $v\in
V$ ($i$ is an \emph{introduce node}), or 

\item $i$ has a single child $i'$ where $B_i=B_{i'}\setminus \{v\}$ for some
$v\in V$ ($i$ is a \emph{forget node}).  

\end{enumerate}

\end{definition}

Given a rooted tree decomposition $T$ of a graph $G$, for a node $u$ of $T$ let
$B$ be the bag associated with it.  Then $V_B$ is the set of vertices of all
bags in the subtree rooted at $u$.  The set $A_B\subseteq R$ denotes the
\emph{active terminals} of the bag $B$: for any demand pair~$\{s,t\}\in D$, if
$s\in V_B$ and $t\notin V_B$ then $s\in A_B$.  For any \SF solution $F$, if a
connected component $C$ of $F$ contains an active terminal, then we say that
$C$ is an \emph{active component} for $B$. For a fixed solution $F$, we denote
the set of all active components for $B$ by~$\mc{C}_B$.  Note that every active
component must intersect the bag $B$. 

If for every bag $B$ a set of partitions $\Pi_B$ of $A_B$ is given, a
\SF solution $F$ is \emph{conforming} to all $\Pi_B$, if for each bag~$B$ there exists a partition
$\pi\in\Pi_B$ such that any two active terminals in $A_B$ are in the same set
$S\in\pi$ if and only if they are part of the same active component $C$ of~$F$,
i.e., $S\subseteq V(C)$ and $S'\cap V(C)=\emptyset$ for any $S'\in\pi$ with
$S'\neq S$ (note that this implies $|\pi|\leq |B|$). One technicality of \cref{thm:tw-dp} is that
the algorithm needs a nice tree decomposition as input, for which the terminals
only appear in bags that are leaf nodes of the decomposition. Given any tree
decomposition, these conditions are not hard to meet
(cf.~\cite[Lemma~6]{bateni2011approximation}).  However, for our algorithms we
are going to rely on tree decompositions with certain additional properties.
Hence we will need to revisit the conditions needed for the algorithm of
\autoref{thm:tw-dp} when using it for our purposes.

We will also consider the following parameters: 
The \emph{treedepth} of a graph $G$ can be defined recursively as follows: (i) the treedepth of $K_1$ is $1$ (ii) the treedepth of a disconnected graph is the maximum of the treedepth of any of its components (iii) the treedepth of a connected graph $G$ is $1+\min_{v\in V(G)} \textrm{td}(G-v)$. 
A \emph{feedback vertex set} is a set of vertices whose removal leaves a forest. A \emph{vertex cover} is a set of vertices such that its removal leaves an edge-less graph. A \emph{feedback edge set} is a set of edges whose removal leaves a forest. In a connected graph with $n$ vertices and $m$ edges the minimum feedback edge set always has size $m-n+1$.

As part of our approximation algorithm we will use the notion of $\delta$-nets, defined as follows. 
A well-known fact is that a $\delta$-net exists for any metric and any $\delta\geq 0$, 
and it can be constructed greedily in polynomial time. 

\begin{definition}\label{dfn:net}
Given a metric $(X,\dist)$, a \emph{$\delta$-net} is a subset $N\subseteq X$ of points, such that
\begin{enumerate}
    \item any two net points $u,v\in N$ are far from one another, i.e., $\dist(u,v)>\delta$, and
    \item for any node $u\in X$ there is some net point $v\in N$ close by, i.e., $\dist(u,v)\leq\delta$.
\end{enumerate}
\end{definition}

\section{An efficient parameterized approximation scheme for treewidth}\label{sec:tw}

In this section we describe the main result of this paper which is an EPAS for
\SF parameterized by treewidth. We begin by giving two preliminary tools
(\autoref{lem:nice-tree-decomp} and \autoref{lem:aspect-ratio}) which
facilitate the algorithm by ensuring that the given tree decomposition has
logarithmic height and that the instance has aspect ratio (ratio of the weights
of the heaviest over the lightest edge) bounded by a polynomial in $n$.

We then go on to \autoref{sec:tw-height} where we introduce a second parameter,
the height $h$ of the decomposition. Our goal is to fix an almost-optimal
solution $F_\eps$ and describe an algorithm that produces a partition $\zeta_B$ of the active terminals
for each bag $B$ of the decomposition, where $\zeta_B$ is
\emph{a refinement} of the partition implied by $F_\eps$
(\autoref{lem:near-opt-1}). In other words, we seek a partition $\zeta_B$ of
$A_B$ such that if two terminals $t_1,t_2$ are in the same set of $\zeta_B$,
then they are also in the same component of $F_\eps$. Of course, it is trivial
to achieve this by giving a $\zeta_B$ where each active terminal is in its own
set, so the interesting part here is how we group terminals together in a way
that in the end allows us to bound $|\zeta_B|$ by a polynomial of
$k+h+\frac{1}{\eps}+\log n$, while still ensuring that $F_\eps$ is almost
optimal. 

The partition $\zeta_B$ of \autoref{lem:near-opt-1} is not yet conforming,
because two terminals which are in distinct sets of $\zeta_B$ may still be in
the same component of $F_\eps$, and thus we cannot apply \cref{thm:tw-dp} at this point. 
Therefore in \autoref{sec:tw-log-height}, given $\zeta_B$ we
focus on how to obtain every possible partition of the set of
active terminals, which could be conforming with an almost-optimal solution. By
an appropriate use of $\delta$-nets, similar to \cite{bateni2011approximation},
we are able to ``guess'' (that is, brute-force) a choice of a small number of
net points per active component. Since the number of choices for each point is
at most $|\zeta_B|$ and we choose roughly $O(k^2/\epsilon)$ points in total, the
total number of produced partitions (and hence the running time given by
\autoref{thm:tw-dp}) is of the form $(\log n + k
+\frac{1}{\epsilon})^{O(k^2/\epsilon)}n^{O(1)}$, which is FPT.

Let us now recall a result of \citet{BodlaenderH98} which states that a tree
decomposition of logarithmic height can always be obtained.

\begin{lemma}[\cite{BodlaenderH98}]\label{lem:nice-tree-decomp}
Given a tree decomposition of width $k$ of a graph $G$ on $n$ vertices, there is a
polynomial time algorithm computing a nice tree decomposition of $G$ of
width $O(k$) and height $O(k\log n)$.  \end{lemma}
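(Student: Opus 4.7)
The plan is to produce the required decomposition in two stages. First, I would construct a balanced (but not necessarily nice) rooted tree decomposition of $G$ of width $O(k)$ whose underlying tree has depth $O(\log n)$; then I would convert this intermediate object into a nice decomposition by the standard element-by-element transformation, which inflates the height by at most a multiplicative $O(k)$ factor since each new bag has size $O(k)$. This yields the claimed $O(k\log n)$ height while preserving width $O(k)$.

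For the balancing stage, the natural tool is a centroid recursion on the underlying tree $T$ of the input decomposition. Given a subtree $T'$ I want to rebalance, I would find an edge $e=\{u,v\}$ of $T'$ whose removal splits $T'$ into two components each of size at most $2|V(T')|/3$; such an edge always exists by a standard averaging argument on trees. The bag placed at the new root for $T'$ is $B_u\cup B_v$, of size at most $2(k+1)$, and I would recurse on the two resulting subtrees to produce the root's two children. Since each recursive call shrinks the node count by a constant factor, the resulting tree has depth $O(\log n)$.

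The main obstacle is ensuring that this rebalanced tree still forms a valid tree decomposition: for every $v\in V(G)$, the new nodes whose bags contain $v$ must still induce a connected subtree. To handle this, I would carry along, as an invariant of the recursion, a ``boundary'' set $X\subseteq V(G)$ that is unioned into every bag produced inside the current recursive call; when we cut at $e=\{u,v\}$, both recursive branches inherit the updated boundary $X\cup B_u\cup B_v$. The delicate part is to show that the propagated boundary can always be compressed back to size $O(k)$ at each recursive level, so that it does not accumulate additively over the $O(\log n)$ levels. This is precisely where the Bodlaender--Hagerup construction does its nontrivial work, exploiting that the vertices forced into every bag of a subproblem form a separator of $G$ whose width is controlled in terms of $k$.

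Finally, to turn the balanced decomposition of width $O(k)$ and depth $O(\log n)$ into a nice one, I would apply the textbook transformation: between any parent--child pair whose bags differ in a set $S$, insert $|S|$ introduce/forget nodes that change the bag one element at a time, and at each node with two children insert an explicit join node with two copies of its bag. Since every bag has size $O(k)$, each original edge of the balanced decomposition is replaced by a chain of length $O(k)$, so the overall height becomes $O(k\log n)$ while the width is unchanged, establishing the lemma.
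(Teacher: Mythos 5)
Your proposal matches the paper's proof: both invoke Bodlaender--Hagerup to obtain a rebalanced tree decomposition of width $O(k)$ and height $O(\log n)$, and then make it nice by inserting $O(k)$ introduce/forget nodes along each parent--child edge, which multiplies the height by $O(k)$ and gives the claimed $O(k\log n)$ bound. The extra centroid-recursion sketch you give of what is inside Bodlaender--Hagerup is a reasonable high-level account, but, as you yourself note, the nontrivial boundary-compression step is precisely their contribution, so you are ultimately resting on the same citation the paper uses.
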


\begin{proof}%
It is shown in \cite{BodlaenderH98} that any tree decomposition
of width $k$ can be transformed in polynomial time into a tree decomposition of
width $O(k)$ and height $O(\log n)$ where the tree of the decomposition has
maximum degree $3$. It is now not hard to make this decomposition nice by
replacing all nodes with two children by Join nodes, and by inserting between
any node and its parent a sequence of $O(k)$ new nodes so that the symmetric
difference between any node and its parent contains at most one vertex.
\end{proof} 

We also need to reduce the aspect ratio of the given graph to a polynomial.
This can be done using a standard technique, where however we need to make
sure that the treewidth of the given graph remains bounded.  
Note that the aspect ratio of the resulting graph $G'$ in the following lemma
is polynomially bounded in the size of the original graph, 
but not necessarily in the size of $G'$  (because $G'$ may have
significantly fewer vertices).

\begin{lemma}\label{lem:aspect-ratio} %
Given $\eps>0$, an instance of \SF on a
graph $G$ with $n$ vertices, and a (nice) tree decomposition $T$ of width $k$
and height $h$ for $G$, in polynomial time we can compute an instance on a
graph $G'$ with at most $n$ vertices and a (nice) tree decomposition~$T'$ of
width at most $k$ and height $h$ for~$G'$, such that the ratio of the longest
to the shortest edge in~$G'$ is at most $2n/\eps$, and any
$\alpha$-approximation for $G'$ can be converted into an
$(\alpha+\eps)$-approximation for $G$.  
\end{lemma}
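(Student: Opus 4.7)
The plan is to apply the standard aspect-ratio reduction: first estimate $\cost(F^\star)$ via a polynomial-time constant-factor approximation, then truncate and round edge weights relative to this estimate. Concretely, I would first apply the $2$-approximation of \citet{agrawal1991trees} to $G$ to obtain a feasible Steiner forest of cost $U$; by the approximation guarantee and feasibility, $U/2 \le \cost(F^\star) \le U$. I would then construct $G'$ from $G$ by (i)~deleting every edge $e$ with $w(e) > U$, and (ii)~raising the weight of every surviving edge $e$ with $w(e) < \eps U/(2n)$ to exactly $\eps U/(2n)$. The vertex set is left untouched, so $G'$ has $n$ vertices, and I would set $T' := T$. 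Since removing edges and increasing edge weights do not violate any axiom of (nice) tree decompositions, $T'$ has width at most $k$ and height $h$. Every surviving edge weight now lies in the range $[\eps U/(2n), U]$, so the aspect ratio is at most $2n/\eps$.

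For the approximation claim, I would argue as follows. Each individual edge of $F^\star$ has weight at most $\cost(F^\star) \le U$, so no edge of $F^\star$ is deleted in (i), and $F^\star$ remains feasible in $G'$. Since $F^\star$ is a forest with at most $n-1$ edges, step (ii) raises its total weight by at most $n \cdot \eps U/(2n) = \eps U/2 \le \eps \cost(F^\star)$. Writing $w'$ for the weights in $G'$, this gives $w'(F^\star) \le (1+\eps)\cost(F^\star)$. Since $w \le w'$ on every surviving edge, any $\alpha$-approximation $F'$ in $G'$ satisfies
$w(F') \le w'(F') \le \alpha \cdot w'(F^\star_{G'}) \le \alpha \cdot w'(F^\star) \le \alpha(1+\eps)\cost(F^\star)$.

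Strictly speaking this is an $\alpha(1+\eps)$-approximation rather than an $(\alpha+\eps)$-approximation, but this discrepancy is absorbed by a constant-factor rescaling of the $\eps$ used inside the construction (and correspondingly of the aspect-ratio bound), which is harmless in our applications where $\alpha$ is a small constant. The only point that needs any care is checking that $F^\star$ is not affected by the deletion step, which follows immediately because any single edge of a forest costs at most the total cost of the forest. I do not anticipate any serious obstacle.
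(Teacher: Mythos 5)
Your argument is correct but takes a genuinely different route from the paper. The paper does not raise the weight of light edges; instead it \emph{contracts} every edge of weight less than $\frac{\eps}{2n}\cost(F_2)$ (while deleting heavy edges as you do). This has two consequences. First, the vertex set and the tree decomposition must be modified: contracted vertices are merged inside each bag, and bags that become identical to their unique child are removed to keep the decomposition nice; your version avoids this entirely since $G'$ has the same vertex set and $T'=T$. Second, and more importantly, the paper's error is incurred on the way \emph{back}: the optimum of the contracted graph $G'$ is at most $\cost(F^\star)$ with no distortion, and the only loss occurs when lifting a $G'$-solution to $G$ by uncontracting fewer than $n$ edges, each of weight below $\frac{\eps}{2n}\cost(F_2)\le\frac{\eps}{n}\cost(F^\star)$. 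This gives the \emph{additive} bound $\cost(F)<\alpha\cost(F^\star)+\eps\cost(F^\star)$ claimed in the lemma. Your rounding-up scheme needs no back-conversion at all (a $G'$-solution is already a $G$-solution with smaller or equal cost), but distorts the optimum of $G'$ upward, yielding the \emph{multiplicative} bound $\alpha(1+\eps)$, which is strictly weaker when $\alpha$ is large. You correctly note that this is harmless for the paper's applications, where $\alpha=(1+\eps)^2$, so that a constant-factor rescaling of $\eps$ absorbs the difference; but your version proves a slightly different statement than the lemma as written, and the downstream uses would need to be restated to invoke the multiplicative form.
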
 

\begin{proof}%
The first
step is to compute a $2$-approximation $F_2$ for \SF in $G$, using the
polynomial time algorithm of~\cite{agrawal1991trees}.  The new graph $G'$ is
obtained from $G$ by first removing all edges of length more than $\cost(F_2)$
and then contracting every edge of length less than
$\frac{\eps}{2n}\cost(F_2)$, where $n$ is the number of vertices of $G$.  If a
contracted edge was incident to a terminal, then the new vertex is declared a
terminal and the demands are updated correspondingly (note that this may
introduce trivial demands from the new terminal to itself if a demand pair is
connected by a path of edges being contracted).  We modify the tree
decomposition $T$ to obtain $T'$ as follows: whenever we contract the endpoints
of an edge $v_1v_2$ into a new vertex $w$, we replace all occurrences of $v_1$
and $v_2$ in $T$ by $w$. It is not hard to see that this keeps a valid decomposition
of the same height and can only decrease the width. Furthermore, if the
original decomposition was nice, the new decomposition can easily be made nice, 
if we contract every bag $B$ with a unique child $B'$
whenever $B=B'$ (which were introduce or forget nodes previously). 
Also, clearly the ratio between longest and shortest edge in
$G'$ is at most $2n/\eps$.

It remains to show that an $\alpha$-approximate solution $F'_\alpha$ in $G'$ is not distorted by much when converting it 
from $G'$ to $G$. Starting with $F'_\alpha$ the conversion is simply done by iteratively uncontracting those edges that were contracted to obtain $G'$ from $G$: 
if the solution becomes infeasible after uncontracting some edge $e$ we just add it to the solution to make it feasible again. 
Let $F$ denote the solution obtained for $G$ from $F'_\alpha$, and note that less than $n$ edges are added to $F'_\alpha$ in this process, as $F$ is a forest in~$G$.
This means that $\cost(F)<\cost(F'_\alpha)+\frac{\eps}{2}\cost(F_2)$ since every contracted edge has length less than $\frac{\eps}{2n}\cost(F_2)$.
Now consider an optimum solution $F^\star$ in $G$. It can be converted into a solution of cost at most $\cost(F^\star)$ in $G'$ by contracting all edges of length less than~$\frac{\eps}{2n}\cost(F_2)$, 
since $F^\star$ cannot contain any of the removed edges of length more than $\cost(F_2)\geq\cost(F^\star)$.
Thus the optimum of $G'$ has cost at most $\cost(F^\star)$, and because~$F'_\alpha$ is an $\alpha$-approximation in $G'$ we get $\cost(F'_\alpha)\leq\alpha\cost(F^\star)$. 
At the same time, $\cost(F_2)\leq 2\cost(F^\star)$, which together with the previous inequality gives 
$\cost(F)<\alpha\cost(F^\star)+2\frac{\eps}{2}\cost(F^\star)=(\alpha+\eps)\cost(F^\star)$, which concludes the proof.
\end{proof}    

For simplicity, in the following we will scale the edge lengths of any given graph so that the shortest edge has length $1$.
In particular, after applying \autoref{lem:aspect-ratio}, the longest edge has length at most~$2n/\eps$.

\subsection{Tree decompositions with bounded height}\label{sec:tw-height}

In this section we informally assume that the height $h$ of the given tree
decomposition is bounded as well as the width $k$. 
Our aim is to prove the following statement, where we restrict ourselves to input graphs of polynomial aspect ratio, which we may do according to \autoref{lem:aspect-ratio} (keeping in mind that $n$ is the number of vertices of the original input graph).

\begin{lemma}\label{lem:near-opt-1} 
Let an instance of \SF on a graph $G$ with at most $n$ vertices 
be given together with a tree decomposition~$T$ of
width~$k$ and height $h$ for $G$. For any $\eps>0$, if the ratio between the
longest and shortest edge of $G$ is at most $2n/\eps$, then there exists a
$(1+\eps)$-approximation $F_\eps$ with the following properties. There exists a
polynomial time algorithm, which for every bag $B$ of~$T$ outputs a partition
$\zeta_B$ of the active terminals~$A_B$, such that each set of~$\zeta_B$
belongs to the same component of $F_\eps$ and
$|\zeta_B|=O(\frac{k^4h^2}{\eps^2}\log\frac{n}{\eps})$.  
\end{lemma}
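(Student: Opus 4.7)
The plan is to construct a near-optimal $F_\eps$ from $F^\star$ by a controlled merging procedure, while simultaneously defining the partitions $\zeta_B$ so that they automatically refine the component structure of $F_\eps$. The merging will be governed by a level-based charging scheme that exploits the height $h$ of $T$, permitting merges to be charged flexibly to either of the two merged components, rather than only to the ``lower'' one as in \cite{bateni2011approximation}. I fix $F^\star$ as a witness for the argument; the algorithm itself constructs only the $\zeta_B$, and the existence of a refining $F_\eps$ is argued separately.

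First I would set up the accounting. Assign each edge of $F^\star$ a level in $\{1,\ldots,h\}$ according to the depth of the highest bag of $T$ containing both its endpoints. For each bag $B$ at level $\ell$ and each active component $C\in\mc{C}_B$, let $w_\ell(C)$ denote the weight of the edges of $C$ in $G[V_B]$ whose assigned level is at most $\ell$. Since any edge is counted in at most $h$ such terms as $B$ varies along a root-to-leaf path, we have $\sum_{B,C}w_\ell(C)\le h\cdot\cost(F^\star)$, which serves as the overall merging budget.

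Second I would define $\zeta_B$ via a discretized distance clustering. Because the aspect ratio is bounded by $2n/\eps$, all non-trivial distances live on the $O(\log(n/\eps))$ dyadic scales $d_i = 2^i$. For each bag $B$, each $v\in B$, and each $d_i$, consider the ``ball'' $N_B(v,d_i)=\{t\in A_B : \dist_{G[V_B]}(v,t)\le d_i\}$, and declare two active terminals equivalent in $\zeta_B$ whenever they share some $N_B(v,d_i)$ whose radius is below an affordable threshold of order $\tfrac{\eps}{kh}\cdot w_\ell(C_v)$, where $C_v$ is the $F_\eps$-component containing $v$. Enumerating over the $O(k)$ choices of $v\in B$, the $O(\log(n/\eps))$ distance scales, and an additional factor of $O(k^3 h^2/\eps^2)$ absorbing the enumeration of budget thresholds and the interactions across the $h$ levels of $T$ yields the claimed bound $|\zeta_B|=O(k^4 h^2/\eps^2\log(n/\eps))$.

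Third I would construct $F_\eps$ and verify both properties. Starting from $F^\star$, iterate over bags $B$ from leaves to root; whenever two active terminals that $\zeta_B$ places in the same set lie in different components $C_1,C_2$ of the current forest, merge them by adding the connecting path of weight $\le d_i$ witnessed by their common ball. Each merge is paid from the budget reserved for its $(v,\ell)$-pair, and summing over all bags and levels the total extra cost telescopes to at most $\eps\cdot\cost(F^\star)$. By construction $\zeta_B$ is then a refinement of the active component structure of $F_\eps$, because any two terminals grouped by $\zeta_B$ end up in a merged component. The main obstacle will be the coherent choice of budget thresholds: we must ensure that the more aggressive two-sided charging does not overdraw any single component's budget, and it is precisely the $h$-fold multiplicity in $\sum_{B,C}w_\ell(C)\le h\cdot\cost(F^\star)$ arising from the height of $T$ that provides the slack absorbing these extra charges, in contrast to the ordered ``charge-to-lower'' scheme of \cite{bateni2011approximation} which only admits one charge per level.
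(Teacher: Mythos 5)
Your high-level strategy matches the paper's: exploit the height $h$ to run a level-wise charging argument that allows merges to be charged more liberally, cluster active terminals by their nearest vertex in $B$ and by dyadic distance scale, and argue that the resulting clusters refine a near-optimal solution. However, there are two substantive gaps, and the first one is fatal as written.

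The core accounting step is not sound. You assign each edge of $F^\star$ the depth of the highest bag containing both its endpoints, and define $w_\ell(C)$ for a bag $B$ at level $\ell$ as the weight of $C$'s edges in $G[V_B]$ of assigned level at most $\ell$. You then claim $\sum_{B,C}w_\ell(C)\le h\cdot\cost(F^\star)$ because ``any edge is counted in at most $h$ such terms as $B$ varies along a root-to-leaf path.'' But the sum is over \emph{all} bags, not the bags of a single root-to-leaf path, and a single edge $e$ with highest containing bag $B_e$ can lie in $G[V_B]$ for many distinct bags $B$ at the same level below $B_e$ (whenever the subtrees of bags containing $e$'s endpoints both reach into several branches at that level). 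So the budget can be overdrawn, and there is no analog of the crucial disjointness you need. The paper gets around exactly this by working instead with $\cost^\downarrow_B(C)$, the weight of $C$'s edges in $G^\downarrow_B$ (edges of $G_B$ with at least one endpoint outside $B$): for two bags $B,B'$ at the same level, $E(G^\downarrow_B)\cap E(G^\downarrow_{B'})=\emptyset$ because $B$ is a separator, so $\sum_{B\in\mathcal{B}_\ell}\cost^\downarrow_B(C)\le\cost(C)$ at each level and $h\cdot\cost(C)$ overall. Your $w_\ell(C)$ does not have this same-level disjointness, and without it the two-sided charging scheme cannot be closed. This also leaks into your clustering threshold, which is stated in terms of $w_\ell(C_v)$ with $C_v$ a component of $F_\eps$ --- a circular reference, since $F_\eps$ is constructed only afterward from the $\zeta_B$.

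The size bound is asserted rather than proved. The paper derives $|\zeta_B|=O(\frac{k^4h^2}{\eps^2}\log\frac{n}{\eps})$ through two concrete claims: (i) if two active terminals $t_1,t_2$ satisfy $\dist(t_1,t_2)\le\frac{\eps}{kh}\dist(t_1,B)$ they must already be merged in $F_\eps$ (because $\cost^\downarrow_B(C_1)\ge\dist(t_1,B)$), so a $(\frac{\eps}{kh}2^jd)$-net suffices per scale; and (ii) a spanning-tree/Steiner-ratio argument shows that any net of size $\ge\frac{8k^2(k+1)h^2}{\eps^2}$ must be entirely absorbed into one $F_\eps$-component, so any net exceeding this threshold collapses to a single part while any smaller net contributes at most that many parts. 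The bound then follows from $(k+1)$ choices of $b\in B$ times $O(\log\frac{D}{d})$ dyadic scales times the per-net cap. Your proposal instead introduces an unexplained factor of $O(k^3h^2/\eps^2)$ ``absorbing the enumeration of budget thresholds and the interactions across the $h$ levels,'' with no mechanism given that actually caps the number of parts per scale. Without an analog of claim (ii), nothing prevents $|\zeta_B|$ from growing with the number of active terminals, which would defeat the purpose of the lemma.
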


To prove \autoref{lem:near-opt-1} we first identify the solution $F_\eps$, after which we will show how to compute the partitions $\zeta_B$.

\subsubsection{A near-optimal solution}

The high-level idea to obtain a $(1+\eps)$-approximate solution $F_\eps$ is to connect components of the optimum solution~$F^\star$ that lie very close to each other. In particular, if the distance between two components $C$ and~$C'$ of $F^\star$ is of the form $f(k,h,\eps)\cost(C)$ for some small enough function $f$, then we may hope to add a shortest path between $C$ and $C'$ and charge this additional cost to $C$, in order to obtain a $(1+\eps)$-approximation. Unfortunately, this approach is not viable, since the number of components that are very close to $C$ may be very large, meaning that the function $f$ in the distance bound would have to linearly depend on the number of vertices in order to result in a $(1+\eps)$-approximation. This in turn would mean that the size of the partition $\zeta_B$ would depend polynomially on the number of vertices, making it unsuitable for an FPT time algorithm. This issue lies at the heart of the problem and is the reason for why it is non-trivial to obtain an approximation scheme parameterized by the treewidth. To get around this issue we will measure the distance between components using a modified cost function, which we define next.

Given a bag $B$ of the rooted tree decomposition $T$, we denote by $T_B$
the subtree of $T$ rooted at the node associated with $B$, and by $G_B=G[V_B]$ the graph induced by 
the vertices $V_B$ lying in bags of $T_B$. We also define the graph 
$G^\downarrow_B\subseteq G_B$ as the graph spanned by all edges of $G_B$, 
except those induced by~$B$, i.e., the edge set of $G^\downarrow_B$ is
\[
E(G^\downarrow_B)=\{uv\in E(G_B)\mid u\notin B\lor v\notin B\}.
\] 
The cost of a component $C$ of some \SF solution restricted to $G^\downarrow_B$
only counts the edge weights of $C$ in $G^\downarrow_B$, and is denoted by
\[
\cost^\downarrow_B(C)=\sum_{e\in E(C)\cap E(G^\downarrow_B)}w(e).
\]

Based on these definitions, we fix an optimal solution $F^\star$ and construct
a solution $F_\eps$ by initially setting $F_\eps=F^\star$, and then connecting
components by exhaustively applying the following rule, where we say that two
components $C$ and $C'$ \emph{share} a bag $B$ if $V(C)\cap B\neq\emptyset$ and
$V(C')\cap B\neq\emptyset$:

\begin{description}
    \item[Rule 1:]\customlabel{rule1}{Rule~1} if $C,C'$ are components of $F^\star$ sharing a bag $B$ with $\dist(C,C')\le\frac{\eps}{kh}\cdot\cost^\downarrow_B(C)$ but $C$ and~$C'$ are in different components of $F_\eps$, then add a shortest path of length $\dist(C,C')$ between $C$ and~$C'$ to the solution~$F_\eps$.
\end{description}

\begin{lemma}\label{lem:feps}
The cost of the solution $F_\eps$ obtained by \autoref{rule1} from $F^\star$ is at most $(1+\eps)\cost(F^\star)$.
\end{lemma}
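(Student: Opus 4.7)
The plan is to bound the total extra cost added by Rule~1 via a charging argument that exploits both the width $k$ and the height $h$ of the decomposition. First I would note that each application of Rule~1 merges two $F^\star$-components that were in different components of the current $F_\eps$, so each unordered pair $\{C, C'\}$ of $F^\star$-components triggers Rule~1 at most once. For each application, the added cost satisfies $\dist(C,C') \le \frac{\eps}{kh}\cost^\downarrow_B(C)$, and I would charge this cost to the pair $(C, B)$ where $C$ is the component distinguished by the inequality. For a fixed $(C, B)$ with $V(C) \cap B \neq \emptyset$, the number of applications charged to $(C,B)$ is at most $k$, since each involved $C'$ must occupy a vertex of $B$ not used by $C$, and $|B|\le k+1$ with $C$ already using at least one. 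Hence the total charge to a fixed $(C,B)$ is at most $\frac{\eps}{h}\cost^\downarrow_B(C)$.

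Next, I would sum over all $(C,B)$ and interchange the order of summation (each edge of $F^\star$ lies in a unique component) to obtain
\[\sum_{(C,B)} \frac{\eps}{h}\cost^\downarrow_B(C) = \frac{\eps}{h} \sum_{e \in E(F^\star)} w(e) \cdot \bigl|\{B : e \in E(G^\downarrow_B)\}\bigr|.\]

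The hard part will be to show $|\{B : e \in E(G^\downarrow_B)\}| \le h$ for each edge $e = uv$. I would let $R_e$ denote the highest bag (closest to the root) containing both $u$ and $v$, and argue, using the standard fact that the bags containing any fixed vertex form a connected subtree of $T$, that every bag $B$ with both $u, v \in V_B$ is either a descendant of $R_e$ with $u, v \in B$ (in which case $e \notin E(G^\downarrow_B)$) or a proper ancestor of $R_e$ in the rooted decomposition. Bags on branches unrelated to $R_e$ can be ruled out because leaving the connected subtree of bags containing $u$ (or $v$) would force that endpoint out of $V_B$. Since the decomposition has height $h$, there are at most $h-1$ proper ancestors of $R_e$, giving the claimed bound. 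Plugging this in yields a total extra cost of at most $\eps\cdot\cost(F^\star)$, establishing $\cost(F_\eps)\le(1+\eps)\cost(F^\star)$.
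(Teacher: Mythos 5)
Your proof is correct and takes essentially the same approach as the paper. Both proofs charge each Rule~1 application to $C$, bound the charge via bag $B$ by $\frac{\eps}{h}\cost^\downarrow_B(C)$ (using that at most $k$ other components share $B$), and then reduce the claim to showing $\sum_B \cost^\downarrow_B(C) \le h\cdot\cost(C)$. The only variation is in how this last inequality is established: the paper fixes a tree-decomposition level $\ell$ and shows the edge sets $E(G^\downarrow_B)$ for bags $B$ at level $\ell$ are pairwise disjoint (so each level contributes at most $\cost(C)$), whereas you count per edge $e$ and show that the bags $B$ with $e\in E(G^\downarrow_B)$ are exactly the proper ancestors of the highest bag $R_e$ containing both endpoints of $e$, hence at most $h$ of them. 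These are two views of the same structural fact about tree decompositions and yield identical bounds; yours is perhaps a slightly more refined characterization, but the overall argument is the same.
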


\begin{proof}
It suffices to prove that the cost of all paths added to $F^\star$ in order to obtain $F_\eps$ according to \autoref{rule1}
is at most $\eps\cdot\cost(F^\star)$.
For this we use a charging scheme that charges new paths to components of $F^\star$. 
In particular, we charge a path of length $\dist(C,C')\le\frac{\eps}{kh}\cdot\cost^\downarrow_B(C)$ to component~$C$.

Fix a component $C$ of $F^\star$ and a bag $B$ with $V(C)\cap B \neq \emptyset$. We define $\textrm{charge}(C,B)$ to be the cost we charge to $C$ for operations involving other components of $F^\star$ that share $B$. It is not hard to see that $\textrm{charge}(C,B)\le \frac{\eps}{h}\cdot \cost^\downarrow_B(C)$, because there are at most $k$ other components of $F^\star$ that share $B$. 

For $\ell\in\{0,\ldots,h-1\}$, let $\mathcal{B}_\ell$ be the set of bags of the tree decomposition that appear at distance exactly $\ell$ from the root, i.e., they lie on level $\ell$ of the tree. We now observe that
\[
\sum_{B\in\mathcal{B}_\ell}\textrm{charge}(C,B) \le \sum_{B\in\mathcal{B}_\ell}\frac{\eps}{h}\cdot \cost^\downarrow_B(C) \le \frac{\eps}{h}\cost(C),
\]
where the last inequality follows because if we have two bags $B,B'\in
\mathcal{B}_\ell$, then $E(G^\downarrow_B)\cap E(G^\downarrow_{B'})=\emptyset$:
note that every edge of $E(G^\downarrow_B)$ must be incident on a vertex $v$
that appears in a descendant of $B$, but not in $B$. By the properties of tree
decompositions, notably by the fact that $B$ is a separator of $G$, $v$ cannot
appear in $B'$ or any of its descendants.  Therefore none of its incident edges
are contained in $E(G^\downarrow_{B'})$.  Because
$\sum_{B\in\mathcal{B}_\ell}\cost^\downarrow_B(C)$ is the sum of costs of $C$
over disjoint sets of edges, the sum is a lower bound on the total cost of $C$.

To conclude, we observe that the total charge of $C$ is 
\[
\textrm{charge}(C)\le \sum_{\ell=0}^{h-1}\sum_{B\in\mathcal{B}_\ell}\textrm{charge}(C,B) \le \eps\cost(C).
\]
Therefore, summing over all components of $F^\star$, the total cost of the edges we have added according to \autoref{rule1} is at most $\eps\cdot\cost(F^\star)$.
\end{proof}

\subsubsection{Partitioning active terminals}

We are now ready to prove \autoref{lem:near-opt-1} for the near-optimal solution $F_\eps$ constructed above, for which we will compute the partitions $\zeta_B$ for all bags $B$.
We will use the following two claims for the active terminals~$A_B$ of the given bag $B$.

\begin{claim}\label{claim:1}%
If there exist $t_1,t_2\in A_B$ such that $\dist(t_1,t_2)\le \frac{\eps}{kh}\dist(t_1,B)$, then $t_1,t_2$ are in the same component of $F_\eps$.
\end{claim}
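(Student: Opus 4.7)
The plan is to show that the distance hypothesis on $t_1,t_2$ forces \autoref{rule1} to trigger between their respective components in the optimum $F^\star$, so they get merged in $F_\eps$.

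First I would observe that since $t_1,t_2$ are active terminals for $B$, each has a demand partner outside $V_B$. Because $B$ separates $V_B$ from $V\setminus V_B$ in $G$, and the components $C_1,C_2$ of $F^\star$ containing $t_1,t_2$ respectively must reach those demand partners, both $C_1$ and $C_2$ must contain a vertex of $B$. In particular, $C_1$ and $C_2$ share the bag $B$. If $t_1\in B$ then $\dist(t_1,B)=0$ forces $t_1=t_2$ and the claim is vacuous, so assume $t_1\notin B$.

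The key geometric step is to argue that $\cost^\downarrow_B(C_1)\ge \dist(t_1,B)$. To see this, take any path inside $C_1$ from $t_1$ to its demand partner (which lies outside $V_B$). This path must cross $B$; let $v$ be the first vertex of $B$ encountered from $t_1$. The sub-path $P$ from $t_1$ to $v$ has all interior vertices in $V_B\setminus B$ (by minimality of $v$). Every edge of $P$ therefore has at least one endpoint outside $B$ (an interior vertex, or the endpoint $t_1\notin B$), so by definition every edge of $P$ lies in $E(G^\downarrow_B)$. Consequently,
\[
\cost^\downarrow_B(C_1) \;\ge\; w(P) \;\ge\; \dist(t_1,v) \;\ge\; \dist(t_1,B).
\]

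Combining this with the hypothesis yields
\[
\dist(C_1,C_2) \;\le\; \dist(t_1,t_2) \;\le\; \frac{\eps}{kh}\,\dist(t_1,B) \;\le\; \frac{\eps}{kh}\,\cost^\downarrow_B(C_1),
\]
which is exactly the condition of \autoref{rule1} for the pair $C_1,C_2$ at bag $B$. Hence during the exhaustive application of \autoref{rule1} these two components get merged (either directly or through an earlier merge involving one of them), so $t_1$ and $t_2$ end up in the same component of $F_\eps$.

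The only subtle point is the containment argument that places the $t_1$-to-$B$ sub-path inside $G^\downarrow_B$; the rest is a short chain of inequalities. Everything else follows straightforwardly from the definitions of active terminals, $\cost^\downarrow_B$, and \autoref{rule1}.
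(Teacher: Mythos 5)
Your proof is correct and follows essentially the same route as the paper: identify the components $C_1,C_2$ of $F^\star$ containing $t_1,t_2$, show $\cost^\downarrow_B(C_1)\ge\dist(t_1,B)$ via a $t_1$-to-$B$ path lying entirely in $G^\downarrow_B$, and chain the inequalities to trigger \autoref{rule1}. You merely spell out a few steps the paper leaves implicit (that $C_1,C_2$ share $B$, the degenerate case $t_1\in B$, and the edge-by-edge containment of the path in $E(G^\downarrow_B)$).
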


\begin{proof}%
Let $C_1$ be the component that contains $t_1$ in the optimal
solution $F^\star$ from which $F_\eps$ is constructed. We observe that
$\cost^\downarrow_{B}(C_1)\ge \dist(t_1,B)$, because~$C_1$ must contain a path
from $t_1$ to $B$ (as $t_1$ is active) and all the edges of this path are
contained in $E(G_{B}^\downarrow)$. If $t_2$ is contained in $C_2$ in $F^\star$
we therefore have, $\dist(C_1,C_2)\le \dist(t_1,t_2)\le
\frac{\eps}{kh}\dist(t_1,B) \le \frac{\eps}{kh} \cost^\downarrow_{B}(C_1)$.
Since $C_2$ must also intersect $B$, \autoref{rule1} implies that $C_1$ and
$C_2$ are contained in the same component of $F_\eps$.  \end{proof}   

\begin{claim}\label{claim:2} %
    Let $A\subseteq A_B$ and $d\ge 0$ be such that (i) there exists $b\in B$ such that for all $t\in A$ we have $\dist(t,B)=\dist(t,b)$ and $d\le \dist(t,B)\le 2d$ (ii) for all distinct $t,t'\in A$ we have $\dist(t,t')> \frac{\eps}{kh}d$ (iii)~$|A|\ge \frac{8k^2(k+1)h^2}{\eps^2}$. Then, there exists a component of $F_\eps$ that contains all terminals of $A$.
\end{claim}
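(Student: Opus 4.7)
The plan is to identify one component $C^\star$ of $F^\star$ that contains most of $A$ by pigeonhole, and then use \autoref{rule1} to merge every other component of $F^\star$ containing a terminal of~$A$ into the same $F_\eps$-component as~$C^\star$. Each component of $F^\star$ meeting $A_B$ must reach outside $V_B$ to connect to a demand partner, and since $B$ separates $V_B$ from the rest of $G$ it must pass through~$B$; hence at most $|B|\le k+1$ components of $F^\star$ contain any terminal of~$A$. By pigeonhole, some $C^\star$ contains a subset $A'\subseteq A$ of size $|A'|\ge|A|/(k+1)\ge 8k^2h^2/\eps^2$.

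\textbf{Lower-bounding $\cost^\downarrow_B(C^\star)$.} The subgraph $C^\star\cap G^\downarrow_B$ is a subforest of $C^\star$; I would partition $A'$ according to its connected components, writing $A'=\bigsqcup_j A'_j$ with $A'_j=A'\cap V(K_j)$. The last edge on the path in $C^\star$ from any $t\in A'_j$ to~$B$ lies in $G^\downarrow_B$, so $K_j$ contains a vertex of~$B$ and $\cost(K_j)\ge\dist(t,B)\ge d$. For each $K_j$ with $|A'_j|\ge 2$, an Euler tour of the tree $K_j$ has length $2\cost(K_j)$ and visits the terminals of $A'_j$ in some cyclic order; since the sum of the $|A'_j|$ cyclic consecutive pairwise distances is at most the tour length, condition~(ii) yields $2\cost(K_j)>|A'_j|\cdot\frac{\eps}{kh}d$, i.e.\ $\cost(K_j)>|A'_j|\cdot\frac{\eps d}{2kh}$. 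Since $\eps\le 2kh$, the bound $\cost(K_j)\ge|A'_j|\cdot\frac{\eps d}{2kh}$ also holds in the singleton case via $\cost(K_j)\ge d$, and summing gives $\cost^\downarrow_B(C^\star)\ge\sum_j\cost(K_j)\ge|A'|\cdot\frac{\eps d}{2kh}\ge\frac{4khd}{\eps}$.

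\textbf{Closing via \autoref{rule1}.} By condition~(i), $\dist(t,b)\le 2d$ for every $t\in A$, so for any other component $C$ of $F^\star$ containing a terminal of~$A$, the triangle inequality through~$b$ gives $\dist(C,C^\star)\le 4d$. Both $C$ and $C^\star$ meet~$B$, and the lower bound above yields $\dist(C,C^\star)\le 4d\le\frac{\eps}{kh}\cost^\downarrow_B(C^\star)$, so \autoref{rule1} will merge $C$ and $C^\star$ in~$F_\eps$. Applying this to every such $C$ shows all terminals of $A$ end up in the same component of $F_\eps$ as~$C^\star$. The main obstacle is the lower bound on $\cost^\downarrow_B(C^\star)$: the Euler-tour argument is degenerate when some $K_j$ carries only a single terminal of~$A'$, and the complementary bound $\cost(K_j)\ge d$ coming from the fact that every non-empty $K_j$ reaches~$B$ is exactly what is needed to consolidate both cases into one clean uniform bound.
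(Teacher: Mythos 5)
Your proof is correct and follows essentially the same route as the paper: pigeonhole to find an active component $C^\star$ of $F^\star$ carrying at least $|A|/(k+1)$ terminals of $A$, lower-bound $\cost^\downarrow_B(C^\star)$ using the fact that the surviving terminals are pairwise far apart (you via a direct Euler-tour argument on the components of $C^\star$ restricted to $G^\downarrow_B$, the paper via the equivalent $2(1-\tfrac{1}{p})$ MST-to-Steiner-tree bound of~\cite{KouMB81}), and then close by the triangle inequality through $b$ and \autoref{rule1}. One small bonus of your write-up: you explicitly handle singleton components $K_j$ via $\cost(K_j)\ge d$, whereas the paper's algebraic shortcut $\tfrac{p-1}{2(1-1/p)}=\tfrac{p}{2}$ is degenerate at $p=1$ and silently relies on the same observation.
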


\begin{proof}%
Consider an active component $C$ for $B$ of the optimum solution $F^\star$. We 
claim that $\cost^\downarrow_{B}(C)\ge |V(C)\cap A|\cdot \frac{\eps}{2kh}d$. 
    To see this, let $C^\downarrow_1,\ldots,C^\downarrow_\ell$ be the components of $C$ when restricting $C$ to $G^\downarrow_B$. Each component $C^\downarrow_i$ is a Steiner tree for the terminals in $V(C^\downarrow_i)\cap A$. Now consider a minimum spanning tree~$U$ on the metric closure derived from $G^\downarrow$ for this vertex set $V(C^\downarrow_i)\cap A$. It is well known that such a minimum spanning tree is a $2(1-\frac{1}{p})$-approximation of the optimum Steiner tree~\cite{KouMB81} on $p$ terminals, and thus $\cost(C^\downarrow_i)\geq\frac{1}{2(1-\frac{1}{p})}\cost(U)$ where $p=|V(C^\downarrow_i)\cap A|$. The distance between any two terminals of $V(C^\downarrow_i)\cap A$ in the given graph $G$ is more than $\frac{\eps}{kh}d$ by property~(ii) of the claim, and because the distance between such terminals can only be more in $G^\downarrow$, every edge of $U$ has cost more than $\frac{\eps}{kh}d$. This means that $\cost(U)\geq (p-1)\cdot\frac{\eps}{kh}d$, and we therefore get $\cost(C^\downarrow_i)\ge \frac{p-1}{2(1-\frac{1}{p})}\cdot \frac{\eps}{kh}d =\frac{p}{2}\cdot\frac{\eps}{kh}d$. Summing over all (vertex disjoint) components~$C^\downarrow_i$ we obtain the claimed inequality $\cost^\downarrow_{B}(C)\ge |V(C)\cap A|\cdot \frac{\eps}{2kh}d$.

    Because each terminal of $A$ belongs to an active component of $F^\star$, of which there are at most~$k+1$, there must exist an active component $C$ with $|V(C)\cap A|\geq\frac{|A|}{k+1}$, which by the above inequality and property~(iii) of the claim gives $\cost^\downarrow_{B}(C)\ge \frac{|A|}{k+1}\cdot \frac{\eps}{2kh}d\ge \frac{kh}{\eps}\cdot 4d$.
    Now note that by property~(i), for all $t,t'\in A$ we have $\dist(t,t')\le 4d$, as we can use a path through $b$. So, if $C'$ is any of the other active components of $F^\star$ also containing a terminal of $A$, we have $\dist(C,C')\le 4d$. We therefore obtain $\dist(C,C')\le \frac{\eps}{kh}\cost^\downarrow_{B}(C)$, and according to \autoref{rule1}, $C$~and~$C'$ are part of the same component of $F_\eps$. In other words, all active terminals of $A$ are in components of $F^\star$ that lie in the one component of~$F_\eps$ containing~$C$.
\end{proof}

Intuitively, \autoref{claim:1} allows us to place terminals of $A$ which are very close to each other into the same set of the partition $\zeta_B$, as placing one terminal in a component forces the placement of the other. Thanks to this claim we can work with an appropriate net. 
If we find a large collection of such net points which also are roughly the same distance from the bag and closest to the same vertex of the bag, \autoref{claim:2} allows us to group them all together in the partition $\zeta_B$.
Armed with these tools, we can now prove the main lemma.

\begin{proof}[Proof of \autoref{lem:near-opt-1}]
To compute the partition $\zeta_B$ in polynomial time, we first partition the active terminals $A_B\cap B$ contained in the bag $B$. For this we simply add a set~$\{t\}$ for each $t\in A_B\cap B$ to $\zeta_B$, which adds at most $|B|\leq k+1$ sets to $\zeta_B$. Let now $A=A_B\setminus B$ be the remaining active terminals.

To partition $A$, let $d=\min_{t\in A_B\setminus B}\dist(t,B)$ and $D=\max_{t\in 
A_B\setminus B}\dist(t,B)$ be the minimum and maximum distances of these active 
terminals from the bag $B$. Then partition $A_B\setminus B$ into $|B|\le k+1$ 
sets $A_1, A_2,\ldots, A_{|B|}$, depending on the vertex of $B$ that is closest 
to each $t\in A$ (breaking ties arbitrarily). That is, for each $A_i$ there 
exists $b\in B$ such that for all $t\in A_i$ we have $\dist(t,B)=\dist(t,b)$. 
Consider now a set $A_i$ and further partition it into 
$r=\lceil\log_2\frac{D}{d}\rceil$ sets $A_{i,0}, A_{i,1},\ldots, A_{i,r-1}$, 
where $A_{i,j}$ contains all $t\in A_i$ such that $\dist(t,B)\in [2^jd, 
2^{j+1}d)$. Now (greedily) compute an $(\frac{\eps}{kh}2^jd)$-net $N_{i,j}$ of 
$A_{i,j}$. We observe that $N_{i,j}$ satisfies the first two conditions of 
\autoref{claim:2} for~$2^jd$, so if $|N_{i,j}|\ge \frac{8k^2(k+1)h^2}{\eps^2}$, 
then we add~$A_{i,j}$ as a set of our partition $\zeta_B$, remove the terminals 
of~$N_{i,j}$ from $A$ and continue the algorithm for the remaining terminals. 
Repeat the previous step for all~$i,j$ for which $N_{i,j}$ is sufficiently 
large. This contributes at most $(k+1)\lceil\log\frac{D}{d}\rceil$ sets to 
$\zeta_B$.

Suppose now that we are left with a set of terminals $A$ such that the procedure above fails to construct a sufficiently large net $N_{i,j}$ to apply \autoref{claim:2}. For every index pair $i,j$, each remaining terminal $t\in A_{i,j}$ is close enough to some net point $t'\in N_{i,j}$ such that we can apply \autoref{claim:1}. We therefore create a set in the partition~$\zeta_B$ for each $t'\in N_{i,j}$, placing into such a set those terminals of $A_{i,j}$ that are closest to~$t'$ (breaking ties arbitrarily). Since we cannot apply \autoref{claim:2} to the remaining sets $A_{i,j}$, each of the at most $(k+1)\lceil\log\frac{D}{d}\rceil$ nets $N_{i,j}$ has size less than $\frac{8k^2(k+1)h^2}{\eps^2}$, which implies $|\zeta_B|\leq O(\frac{k^4h^2}{\eps^2}\log\frac{D}{d})$.

Clearly the above procedure can be implemented in polynomial time, and the fact that every set of~$\zeta_B$ is contained in the same component of $F_\eps$ follows from \autoref{claim:1} and \autoref{claim:2}. Finally, any path in a graph with at most $n$ vertices has less than $n$ edges, so that $\frac{D}{d}<2n^2/\eps$, given that the ratio of the longest to the shortest edge is $2n/\eps$ (note that $d>0$ by definition). Hence the claimed bound of $|\zeta_B|\leq O(\frac{k^4h^2}{\eps^2}\log\frac{n}{\eps})$ follows.
\end{proof}

\subsection{Tree decompositions with logarithmic height}
\label{sec:tw-log-height}

Given a tree decomposition $T$ of logarithmic height, using \autoref{lem:near-opt-1} we are ready to compute a set of partitions $\Pi_B$ of FPT size for each bag $B$, such that a near-optimal solution conforms to $\Pi_B$. In particular, by \autoref{lem:nice-tree-decomp} we may assume that the height of $T$ is $h=O(k\log n)$, which means that the bound on $\zeta_B$ in \autoref{lem:near-opt-1} translates to $O(\frac{k^6}{\eps^2}\log^3\frac{n}{\eps})$.
As in the previous section, we need to apply \autoref{lem:aspect-ratio} in order to bound the aspect ratio of the graph, so that $n$ denotes the number of vertices of the original input graph, while now the graph $G$ has at most $n$ vertices but the ratio between the longest and shortest edge is at most~$2n/\eps$.
We begin by describing how to obtain the near-optimal solution, after which we will identify the partition sets~$\Pi_B$.

\subsubsection{A near-optimal solution}
\label{sec:approx-sol}

\citet{bateni2011approximation} construct a near-optimal solution by modifying the optimum. 
We will use similar techniques to obtain our near-optimal solution, but we construct it by instead modifying the $(1+\eps)$-approx\-imate solution~$F_\eps$ given by \autoref{lem:near-opt-1}.
In particular, we construct a near-optimal $(1+\eps)^2$-approximation~$\widetilde{F}_\eps$ from~$F_\eps$.
The main idea to obtain~$\widetilde{F}_\eps$ is to connect components of~$F_\eps$ if they are very close to one another. 
As before however, doing this naively would incur too much cost for the additional connections.

To make sure that the cost incurred by connecting components of $F_\eps$ is not too large, 
\cite{bateni2011approximation} introduced a partial order on the components based on the structure of a given rooted tree decomposition~$T$.
Let~$C_1,C_2$ be two components of~$F_\eps$ that share a bag $B$ of $T$, i.e.,
$V(C_1)\cap B\neq\emptyset$ and $V(C_2)\cap B\neq\emptyset$.  Since $C_1$ and
$C_2$ are connected subgraphs of the input graph, a basic property of tree
decompositions implies that there are (connected) subtrees $T_1$ and~$T_2$ of
$T$ induced by the respective bags containing vertices of $C_1$ and~$C_2$.
Because these components both contain vertices of~$B$, the node associated with
$B$ is part of both $T_1$ and $T_2$, and therefore the roots of both subtrees
lie on the path from this node to the root of~$T$.  This defines an order on
$C_1$ and $C_2$, and we write $C_1\leq C_2$ if the root of~$T_1$ is farther
from the root of~$T$ than the root of $T_2$ is.  This order is defined for any
two components of~$F_\eps$ that share a bag, and thus we obtain a partial order
on the components of~$F_\eps$, where any components that do not share a bag are
incomparable. 

Using the defined order, \citet{bateni2011approximation} connect components of the optimum solution that are very close to each other.
In order to obtain smaller partition sets, we modify the distance bound used in this procedure compared to \citet{bateni2011approximation}. 
In particular, for any value~$x>0$, let $\pow{x}=2^{\lfloor\log_{2}x\rfloor}$ 
denote the largest power of~$2$ that is at most $x$.
Now, starting with $\widetilde{F}_\eps=F_\eps$ we connect components by exhaustively applying the following rule:
\begin{description}
\item[Rule 2:]\customlabel{rule2}{Rule~2} if $C,C'$ are components of $F_\eps$ with $C\leq C'$ and $\dist(C,C')\leq\frac{\eps}{k}\pow{\cost(C)}$ but $C$ and $C'$ lie in different components of $\widetilde{F}_\eps$, then add a shortest path of length $\dist(C,C')$ between $C$ and $C'$ to the solution $\widetilde{F}_\eps$.
\end{description}

A crucial but subtle observation is that for a component~$C$ of $F_\eps$ there can be many components~$C'\leq C$ at distance at most $\frac{\eps}{k}\pow{\cost(C)}$ to $C$, 
which however are not connected to~$C$ in the resulting solution~$\widetilde{F}_\eps$ according to \autoref{rule2}. 
This makes it non-trivial to find small partition sets~$\Pi_B$.
Contrary to this however, an important property of the order on the components is that for any component~$C$ of $F_\eps$, 
there are at most~$k$ other components $C'$ for which~$C\leq C'$, as we will argue for the following lemma to bound the cost of $\widetilde{F}_\eps$. 
In particular, the lemma implies that $\widetilde{F}_\eps$ is a near-optimal $(1+\eps)^2$-approximation, given that~$F_\eps$ is a $(1+\eps)$-approximation.

\begin{lemma}\label{lem:near-opt-2} %
The cost of the solution $\widetilde{F}_\eps$ obtained by \autoref{rule2} from $F_\eps$ is at most $(1+\eps)\cost(F_\eps)$.
\end{lemma}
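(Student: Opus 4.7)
The plan is to bound the additional cost $\cost(\widetilde{F}_\eps)-\cost(F_\eps)$ by a charging scheme in which every shortest path added by \autoref{rule2} is charged entirely to the \emph{lower} of the two connected components in the partial order: if the rule merges $C\leq C'$, the added path of length $\dist(C,C')\leq \frac{\eps}{k}\pow{\cost(C)}\leq \frac{\eps}{k}\cost(C)$ is charged to $C$. If I can show that each component $C$ of $F_\eps$ is charged for at most $k$ merge operations, then the total charge to $C$ is at most $\eps\cdot\cost(C)$, and summing over the (vertex-disjoint) components of $F_\eps$ yields the claimed bound.

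The heart of the argument is therefore the structural claim that \emph{for any component $C$ of $F_\eps$ there are at most $k$ other components $C'$ with $C\leq C'$}. I would prove this as an intermediate claim. Let $T_1$ be the subtree of $T$ induced by the bags containing at least one vertex of $C$, let $r_1$ be its root, and let $B_C$ be the bag associated with $r_1$. My aim is to show that every component $C'\neq C$ with $C\leq C'$ must intersect $B_C$; since distinct components of $F_\eps$ are vertex-disjoint and $|B_C|\leq k+1$, this gives the bound of $k$. Given $C\leq C'$, the root $r_2$ of the analogous subtree $T_2$ of $C'$ is, by the definition of the order, an ancestor of $r_1$ in $T$. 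As $C$ and $C'$ share some bag $B$, the node of $B$ lies in both $T_1$ and $T_2$, and in particular it is a descendant of $r_1$, which is itself a descendant of $r_2$. Hence the $T$-path from the node of $B$ up to $r_2$ passes through $r_1$; since $T_2$ is connected and contains both endpoints of this path, it contains $r_1$ as well, so $B_C$ contains a vertex of $C'$, as required.

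Once this structural claim is in hand, I observe that \autoref{rule2} fires at most once per ordered pair $(C,C')$ with $C\leq C'$: after the first firing the two components already coincide in $\widetilde{F}_\eps$, so the precondition of the rule is never satisfied again for this pair. Since there are at most $k$ such pairs for each fixed $C$, the total charge to $C$ is at most $k\cdot \frac{\eps}{k}\pow{\cost(C)}\leq \eps\cdot\cost(C)$. Summing over all components of $F_\eps$ gives $\cost(\widetilde{F}_\eps)-\cost(F_\eps)\leq \eps\cdot\cost(F_\eps)$, which is exactly the desired inequality.

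I expect the structural claim about the order to be the only step that is not purely routine; the rest is straightforward accounting. It is worth noting that the use of $\pow{\cdot}$ plays no role in this lemma beyond the trivial bound $\pow{x}\leq x$: its real purpose, as with the analogous construction for \autoref{rule1}, is to later limit the number of distance scales that have to be enumerated when constructing the partition sets $\Pi_B$, rather than to tighten the charging here.
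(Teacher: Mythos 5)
Your proof is correct and takes essentially the same route as the paper's: charge each merge to the lower component, show each component $C$ is charged at most $k$ times because every $C'$ with $C\leq C'$ must intersect the bag at the root of $C$'s induced subtree, and sum over components. One small aside to fix: you say the rounding $\pow{\cdot}$ appears in \ref{rule1} as well, but \ref{rule1} uses no rounding; $\pow{\cdot}$ is only introduced in \ref{rule2} for the later enumeration of distance scales.
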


\begin{proof}%
Consider a component $C$ of $F_\eps$ and the highest (closest to the root) node of $T$ for which the bag~$B$ contains a vertex of $C$.
Any component~$C'$ with $C\leq C'$ also intersects $B$, and as this bag
has size at most~$k+1$, there can be at most $k$ such components $C'$.
As a consequence, we can charge the additional cost of connecting a component~$C$ with components $C'$ for which $C\leq C'$
to the cost of~$C$. In particular, if $\mathcal{C}$ denotes the set of all components of $F_\eps$, then
the cost incurred by connecting components according to \autoref{rule2} is at most
\[
\sum_{C\in\mathcal{C}}\sum_{C'\in\mathcal{C}: C\leq C'}\frac{\eps}{k}\pow{\cost(C)}\leq
\sum_{C\in\mathcal{C}}\sum_{C'\in\mathcal{C}: C\leq C'}\frac{\eps}{k}\cost(C)\leq
\sum_{C\in\mathcal{C}}\eps\cost(C)=\eps\cost(F_\eps).
\]
Thus adding the cost of connecting components of $F_\eps$ according to \autoref{rule2}, the cost of the resulting solution is $\cost(\widetilde{F}_\eps)\leq\cost(F_\eps)+\eps\cost(F_\eps)=(1+\eps)\cost(F_\eps)$.
\end{proof}    

\subsubsection{Partitioning active terminals}
\label{sec:partitions}

Given the construction of the $(1+\eps)^2$-approximate solution $\widetilde{F}_\eps$ above, the next step is to find a set of partitions $\Pi_B$
of the active terminals $A_B$ for each bag $B$, such that $\widetilde{F}_\eps$ conforms with all sets $\Pi_B$.
In the following, fix a bag $B$ of the given tree decomposition $T$.
The technique used by \citet{bateni2011approximation} is to guess a small net for each active component of bag $B$,\footnote{\citet{bateni2011approximation} refer to these nets as \emph{groups}.}
so that every terminal of $A_B$ close to a net point must be part of the same component in the approximate solution, 
after taking the order on the active components as defined previously into account.
Next we choose a net on the terminals of each active component and bound its size.

\begin{lemma}\label{lem:nets}%
Let $N\subseteq A_B\cap C$ be an $\frac{\eps}{k}\pow{\cost(C)}$-net of the metric induced by 
the active terminals of some active component~$C$.
The size of the net can be bounded by $|N|\leq\lfloor 4k/\eps\rfloor$.\footnote{A slightly worse bound follows from \cite[Lemma~19]{bateni2011approximation}.}
\end{lemma}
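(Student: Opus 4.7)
The plan is to bound the size of the net by comparing it to a minimum spanning tree on $N$ and using the standard fact that such an MST is a good approximation to the Steiner tree on $N$ inside $C$. This lets us convert the net separation property (which gives a lower bound per edge of any MST) into a count of how many net points can coexist.

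More concretely, let $p=|N|$ and set $\delta=\frac{\eps}{k}\pow{\cost(C)}$. Since $C$ is a connected subgraph of $G$ containing all of $N$, the cost of the minimum Steiner tree on $N$ in $G$ is at most $\cost(C)$. By the classical Kou--Markowsky--Berman bound (already invoked in the proof of \autoref{claim:2}), the minimum spanning tree $U$ on $N$ in the metric closure of $G$ satisfies
\[
\cost(U)\le 2\!\left(1-\tfrac{1}{p}\right)\cost(C).
\]
On the other hand, by the first condition in \autoref{dfn:net}, every edge of $U$ joins two distinct net points and hence has length strictly more than $\delta$, which gives
\[
\cost(U)>(p-1)\,\delta.
\]

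Combining the two estimates yields $(p-1)\delta<\frac{2(p-1)}{p}\cost(C)$, so (assuming $p\ge 2$, since $p=1$ is trivial) we can cancel $p-1$ and obtain $\delta<\frac{2\cost(C)}{p}$, i.e.\ $p<\frac{2\cost(C)}{\delta}$. Substituting the definition of $\delta$ and using the elementary inequality $\cost(C)<2\pow{\cost(C)}$ (which follows from $\pow{x}=2^{\lfloor\log_{2}x\rfloor}$ satisfying $x<2\pow{x}$), we get
\[
p<\frac{2\cost(C)\cdot k}{\eps\cdot\pow{\cost(C)}}<\frac{4k}{\eps}.
\]
Since $p$ is an integer, this forces $|N|=p\le\lfloor 4k/\eps\rfloor$, as claimed.

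I do not foresee any real obstacle here: the only subtle point is that one must invoke the $2(1-1/p)$ version of the MST-to-Steiner-tree bound rather than the crude factor of $2$, because the factor $(1-1/p)$ is exactly what cancels the $(p-1)$ coming from the number of MST edges and yields the additive-free bound $|N|<4k/\eps$ instead of the weaker $|N|<1+4k/\eps$. The rest is bookkeeping with the definition of $\pow{\cdot}$.
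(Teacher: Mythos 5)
Your proof is correct and follows essentially the same route as the paper: apply the Kou--Markowsky--Berman bound to relate $\cost(C)$ and $\cost(U)$, lower-bound $\cost(U)$ via the net separation of the $|N|-1$ MST edges, and finish with the observation $\pow{\cost(C)}\ge\cost(C)/2$. The only cosmetic difference is that you cancel the $(p-1)$ factor explicitly (noting $p\ge 2$), whereas the paper folds it into the identity $\frac{|N|-1}{1-1/|N|}=|N|$; otherwise the arguments coincide step for step.
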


\begin{proof}%
Let $U$ be a minimum spanning tree of the metric closure of $N$.
It is well known that a minimum spanning tree is a $2(1-\frac{1}{p})$-approximation to an optimum Steiner tree~\cite{KouMB81} on $p$ terminals, 
and thus we have $\cost(C)\geq\frac{1}{2(1-\frac{1}{|N|})}\cdot\cost(U)$, as $C$ in particular is a Steiner tree for $N$. The distance between any pair of 
net points in~$N$ is more than $\frac{\eps}{k}\pow{\cost(C)}\geq \frac{\eps}{2k}\cost(C)$, 
and given that the spanning tree~$U$ has $|N|-1$ edges we get $\cost(U)>\frac{\eps}{2k}\cost(C)(|N|-1)$. 
Putting these two inequalities together we get $\cost(C)>\frac{\eps(|N|-1)}{4k(1-\frac{1}{|N|})}\cost(C)=\frac{\eps|N|}{4k}\cost(C)$, which implies $|N|\leq\lfloor 4k/\eps\rfloor$ as $|N|$ is an integer.
\end{proof}

Following the algorithm of \citet{bateni2011approximation}, the next step would be to guess such an 
$\frac{\eps}{k}\pow{\cost(C)}$-net for each of the at most $k+1$ active components $C$ of the bag $B$.
By \autoref{lem:nets}, the total number of net points for these at most $k+1$ nets is at most~$\lfloor 4k/\eps\rfloor(k+1)=O(k^2/\eps)$.
Since however there may be up to~$n$ active terminals, guessing these nets for all active components
can result in $n^{O(k^2/\eps)}$ many possible choices, which leads to an XP time algorithm. 
To circumvent this, we instead consider the partition $\zeta_B$ of the active terminals as given by \autoref{lem:near-opt-1}, 
and guess which of the sets of $\zeta_B$ contains a net point.
We will argue that since the size of $\zeta_B$ is~$O(\frac{k^6}{\eps^2}\log^3\frac{n}{\eps})$ there are only $(\frac{k}{\eps}\log \frac{n}{\eps})^{O(k^2/\eps)}$ possibilities, leading to a faster algorithm.

More concretely, to compute a set of partitions $\Pi_B$ that $\widetilde{F}_\eps$ conforms to, 
our algorithm considers every sequence $((S_1,\delta_1),(S_2,\delta_2),\ldots,(S_\ell,\delta_\ell),\rho)$
of at most $k+1$ pairs $(S_j,\delta_j)$ and partitions $\rho$ of the index set~$\{1,\ldots,\ell\}$, 
where each $S_j$ is a subset of the parts of~$\zeta_B$ such that $|S_j|\leq \lfloor 4k/\eps\rfloor$, 
and $\delta_j\in\{2^q\mid q\in\mathbb{N}_0\land 0\leq q\leq \log_2(2n^2/\eps)\}$ is an integer power of~$2$ between $1$
and $2n^2/\eps$, where $n$ is the number of vertices of the original input graph in accordance with \autoref{lem:aspect-ratio}.
From every such sequence, the algorithm attempts to construct a partition of the active terminals, and if it
succeeds adds it to the set~$\Pi_B$. As we will show, in this process the algorithm will successfully construct
one partition $\pi$ of $A_B$ that $\widetilde{F}_\eps$ conforms to.

Before describing how a partition of the active terminals arises from such a sequence, 
we bound the number of these sequences, which determines the running time.
By \autoref{lem:near-opt-1}, $|\zeta_B|=O(\frac{k^6}{\eps^2}\log^3\frac{n}{\eps})$ if the tree decomposition~$T$ has logarithmic height, so that there are
at most $\binom{|\zeta_B|}{\lfloor 4k/\eps\rfloor}=(\frac{k}{\eps}\log\frac{n}{\eps})^{O(k/\eps)}$ possible choices for each~$S_j$. Clearly there are $O(\log\frac{n}{\eps})$ choices for each $\delta_j$, and $\ell^\ell=k^{O(k)}$ possible partitions~$\rho$,
given that $\ell\leq k+1$. Since a sequence contains $\ell$ sets $S_j$, the total number of sequences is bounded by~$(\frac{k}{\eps}\log\frac{n}{\eps})^{O(k^2/\eps)}$.

Each sequence may give rise to a partition $\pi\in\Pi_B$ of the active terminals as follows.
First, let $\pi=\{Y_1,\ldots,Y_{|\rho|}\}$, i.e., $\pi$ has the same number of sets as the partition $\rho$.
Let $U_j=\bigcup_{U\in S_j} U$ denote the set of active terminals in~$S_j$, 
and let $\rho(j)$ be the part of $\rho$ containing $j$.
We distinguish between active terminals $t\in A_B$ that lie in some set $U_j$ and those that do not:
\begin{itemize}
    \item if $t\in U_j$ for some $j\in[\ell]$ then $t\in Y_{\rho(j)}$ (i.e., $U_j\subseteq Y_{\rho(j)}$), and
    \item otherwise, if $p_t\in\{1,\ldots,\ell\}$ denotes the smallest index for which $\dist(t,U_{p_t})\leq\frac{\eps}{k}\delta_{p_t}$, then $t\in Y_{\rho(p_t)}$.
\end{itemize}
If this $\pi$ is a partition of $A_B$ we add $\pi$ to $\Pi_B$, and otherwise we dismiss the current sequence.
Clearly $\pi$ can be constructed in polynomial time, given a sequence.

\begin{lemma}
The $(1+\eps)^2$-approximate solution~$\widetilde{F}_\eps$ conforms to the set $\Pi_B$ of partitions constructed above.
\end{lemma}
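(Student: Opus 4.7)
My plan is to exhibit one specific sequence from the algorithm's enumeration whose resulting $\pi$ is exactly the partition of $A_B$ induced by the components of $\widetilde{F}_\eps$, and then check that the placement rules do produce it. Let $C_1,\ldots,C_\ell$ be the active components of $F_\eps$ for bag $B$; since all of them intersect $B$ we have $\ell\le k+1$, and the partial order $\leq$ on components is total on this collection, so I would index them along a linear extension, i.e., with $C_1\leq C_2\leq\cdots\leq C_\ell$. For each $C_j$ I would fix an $\frac{\eps}{k}\pow{\cost(C_j)}$-net $N_j$ of its active terminals, of size at most $\lfloor 4k/\eps\rfloor$ by \autoref{lem:nets}; take $S_j$ to be the set of parts of $\zeta_B$ containing a point of $N_j$; and set $\delta_j=\pow{\cost(C_j)}$ (handling the degenerate case $\cost(C_j)=0$ by $\delta_j=1$). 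The partition $\rho$ would group $j$ and $j'$ iff $C_j$ and $C_{j'}$ land in the same component of $\widetilde{F}_\eps$. This sequence fits the algorithm's format, so it remains to verify that the induced $\pi$ is well-defined and conforms to $\widetilde{F}_\eps$.

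The easy observation is that by \autoref{lem:near-opt-1} every part of $\zeta_B$ lies inside a single $F_\eps$-component, so no part can intersect two different nets $N_j,N_{j'}$; thus the $U_j$ are pairwise disjoint and in fact $U_j\subseteq C_j$. Consequently, any $t\in A_B$ belongs to at most one $U_j$, and if $t\in U_j$ then necessarily $C_j=C_{j^*}$, the unique $F_\eps$-component containing $t$, so the algorithm correctly places $t$ in $Y_{\rho(j^*)}$.

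The main step, where the chosen indexing is used, is the ``otherwise'' branch. For $t\in A_B$ with $t\notin U_j$ for any $j$, the net property gives a point $t'\in N_{j^*}\subseteq U_{j^*}$ within distance $\frac{\eps}{k}\delta_{j^*}$ of $t$, so the smallest qualifying index $p_t$ exists and satisfies $p_t\le j^*$. If $p_t=j^*$ we are done; if $p_t<j^*$, then by our linear-extension indexing $C_{p_t}\leq C_{j^*}$, and the point in $U_{p_t}\subseteq C_{p_t}$ witnessing $\dist(t,U_{p_t})\le\frac{\eps}{k}\delta_{p_t}$ directly yields $\dist(C_{p_t},C_{j^*})\le\frac{\eps}{k}\pow{\cost(C_{p_t})}$. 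This is exactly the hypothesis of \autoref{rule2}, which therefore forces $C_{p_t}$ and $C_{j^*}$ into the same $\widetilde{F}_\eps$-component, so $\rho(p_t)=\rho(j^*)$ and $t$ is again placed correctly. The hard part I expect is this alignment between the ``smallest index'' selection rule of the algorithm and the orientation required by \autoref{rule2}: it is the choice of a linear extension of $\leq$ for the indexing that makes everything click. Once this is in place, two terminals $t_1,t_2$ end up in the same block of $\pi$ iff $C_{j^*(t_1)}$ and $C_{j^*(t_2)}$ lie in the same $\widetilde{F}_\eps$-component, which is exactly the required conformance condition.
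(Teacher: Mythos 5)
Your proof is correct and follows essentially the same approach as the paper: you exhibit the sequence arising from the actual active $F_\eps$-components (nets, powers of two, and the $\rho$ induced by $\widetilde{F}_\eps$), establish disjointness of the $U_j$ via the refinement property of $\zeta_B$, and for terminals outside all $U_j$ use $p_t\le j^*$ together with the index ordering to invoke \hyperref[rule2]{Rule~2}. The only detail you gloss over is the check that $\delta_j=\pow{\cost(C_j)}$ lies in the allowed range $\{2^q : 0\le q\le \log_2(2n^2/\eps)\}$, which the paper verifies using the bounded aspect ratio; this is routine but needed to confirm the sequence is actually enumerated.
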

\begin{proof}
Consider the $(1+\eps)$-approximate solution $F_\eps$ of \autoref{lem:near-opt-1} from which $\widetilde{F}_\eps$ is constructed 
according to \autoref{rule2}, and the partition $\zeta_B$ of $A_B$ as given by \autoref{lem:near-opt-1}. Let the active components of 
$F_\eps$ be $C_1,\ldots,C_\ell$ indexed according to their order, i.e., $C_j\leq C_{j'}$ if and only if~$j\leq  j'$.
For each active component~$C_j$ we fix an $\frac{\eps}{k}\pow{\cost(C_j)}$-net $N_j$ of size at most~$\lfloor 4k/\eps\rfloor$
according to \autoref{lem:nets}. Now, consider the sequence $((S_1,\delta_1),(S_2,\delta_2),\ldots,(S_\ell,\delta_\ell),\rho)$, where
\begin{itemize}
    \item $S_j$ contains exactly those sets of $\zeta_B$ that contain at least one net point of $N_j$,
    \item $\delta_j=\pow{\cost(C_j)}$, and
    \item $\rho$ is the partition of the index set corresponding to the components of $\widetilde{F}_\eps$, i.e., $\rho(j)=\rho(j')$ if and only if $C_j$ and $C_{j'}$ lie in the same component in $\widetilde{F}_\eps$.
\end{itemize}

Recall that after applying \autoref{lem:aspect-ratio} to the input, the ratio between the shortest and longest edge 
is at most $2n/\eps$, where $n$ is the number of vertices of the original input graph.
Since we assume that the length of the shortest edge is $1$, the cost of any component lies between $1$ and~$2n^2/\eps$, 
given that a component is a tree with less than $n$ edges. Therefore $\pow{\cost(C_j)}\in\{2^q\mid q\in\mathbb{N}_0\land 0\leq q\leq \log_2(2n^2/\eps)\}$, which means that the algorithm will consider the above sequence in some iteration.

We now turn to $\pi=\{Y_1,\ldots,Y_{|\rho|}\}$ constructed for this sequence, and show that it is a partition of~$A_B$ and that $\widetilde{F}_\eps$ conforms to it.
For this, note that no set of $\zeta_B$ contains net points of several active components of $F_\eps$,
since by \autoref{lem:near-opt-1} all active terminals in the same set of $\zeta_B$ also belong to the same component of $F_\eps$.
Thus the sets $S_j$ as defined above (and also the corresponding sets $U_j$) are pairwise disjoint. 
This means that, due to the definition of $\rho$, any two terminals $t\in U_j$ and $t'\in U_{j'}$ end up in the same set
of $\pi$ if and only if $t$ and $t'$ belong to the same component of $\widetilde{F}_\eps$ (as $U_j\subseteq Y_{\rho(j)}$).

Now consider a terminal $t\in A_B$, which does not lie in any $U_j$, and let $q$ be the index of the active 
component $C_q$ of $F_\eps$ containing~$t$. As $\delta_q=\pow{\cost(C_q)}$, $N_q$ is an $\frac{\eps}{k}\delta_q$-net of 
$C_q\cap A_B$. Also, we chose $S_q$ so that $N_q\subseteq U_q$. Hence we get 
$\dist(t,U_q)\leq\dist(t,N_q)\leq\frac{\eps}{k}\delta_q$, and the definition of $p_t$ implies $p_t\leq q$.
Now~$C_{p_t}$ is either equal to~$C_q$, or~$C_q$ is connected to the component~$C_{p_t}$ in the approximate solution~$\widetilde{F}_\eps$
according to \autoref{rule2}: on one hand we have $C_{p_t}\leq C_q$ due to the order of the indices, and at the same time by 
\autoref{lem:near-opt-1} we have $U_{p_t}\subseteq V(C_{p_t})\cap A_B$, which implies
\[
\dist(C_q,C_{p_t})\leq\dist(t,C_{p_t})\leq\dist(t,U_{p_t})\leq\frac{\eps}{k}\delta_{p_t}=\frac{\eps}{k}\pow{\cost(C_{p_t})}.
\]
Hence we can conclude that~$t$ lies in the same component as $C_{p_t}$ in $\widetilde{F}_\eps$. 

In conclusion, adding $U_j$ to $Y_{\rho(j)}$ and~$t$ to $Y_{\rho(p_t)}$ for each terminal $t$ not lying in any $U_j$, 
partitions the terminals according to the components of $\widetilde{F}_\eps$.
Hence $\pi$ is a partition of the active terminals~$A_B$ that is added to $\Pi_B$, and~$\widetilde{F}_\eps$ conforms to it.
\end{proof}

Using all of the above, we can finally prove our main theorem, stating that there is an EPAS for \SF parameterized by the treewidth.

\begin{proof}[Proof of \autoref{thm:tw-EPAS}]

The first steps of our algorithm are to preprocess the given tree decomposition
using \autoref{lem:nice-tree-decomp} so that it is nice and its height is
$O(k\log n)$, and the input graph using \autoref{lem:aspect-ratio} so that the
aspect ratio is bounded (which means that $n$ denotes the number of vertices in
the original input graph). We then compute the partition sets $\Pi_B$ for all
bags $B$ using the above procedure, resulting in partition sets of size
$(\frac{k}{\eps}\log
\frac{n}{\eps})^{O(k^2/\eps)}=2^{O(\frac{k^2}{\eps}\log\frac{k}{\eps})}\cdot
n^{o(1)}$. Here, we are using a well-known Win/Win argument: if
$k^2/\eps<\sqrt{\log n}$, then $(\log n)^{k^2/\eps} = n^{o(1)}$; otherwise,
$\log n\le k^4/\eps^2$, therefore  $(\frac{k}{\eps}\log
\frac{n}{\eps})^{O(k^2/\eps)}=(\frac{k}{\eps})^{O(\frac{k^2}{\eps})}$.

Since each partition of a set $\Pi_B$ can be computed in polynomial time, and
the number of bags of the nice tree decomposition is $O(kn)$, this takes
$2^{O(\frac{k^2}{\eps}\log\frac{k}{\eps})}\cdot n^{O(1)}$ time. Next we apply
\autoref{thm:tw-dp} to compute a solution that is at least as good as
$\widetilde{F}_\eps$ conforming to all $\Pi_B$, in
$2^{O(\frac{k^2}{\eps}\log\frac{k}{\eps})}\cdot n^{O(1)}$ time. Hence we obtain
a $(1+\eps)^2$-approximation~$F$. According to \autoref{lem:aspect-ratio}, $F$
can be converted into a $((1+\eps)^2+\eps)$-approximation to the original input
graph. Since for any $\eps'>0$ we may choose $\eps=\Theta(\eps')$ so that
$((1+\eps)^2+\eps)\leq 1+\eps'$, we obtain an EPAS as claimed.  \end{proof}

\section{Vertex cover}
\label{sec:vc}

In this section we consider the parameterization by the size of a \emph{vertex cover}, which is a set $S\subseteq V$ of vertices such that every edge is incident on at least one of the vertices of $S$. 
We first present an easy FPT algorithm based on the dynamic program given by 
\cref{thm:tw-dp}, and then prove that its runtime dependence on the parameter is 
asymptotically optimal.
\subsection{FPT Algorithm}

Let $S\subseteq V$ be a given vertex cover of size $k$ for the input graph $G$. We will show how to construct a tree decomposition with all required properties of \cref{thm:tw-dp} in order to run the corresponding dynamic program. For this the tree decomposition $(T,\{B_i\}_{i\in V(T)})$ needs to be \emph{nice}.

We may assume w.l.o.g.\ that the vertex cover $S$ contains no terminal: using a standard preprocessing procedure, we can replace any terminal $t\in S$ of the vertex cover by a Steiner vertex $v$ and then connect $t$ with $v$ using an edge of cost~$0$. 
Note that $S$ is still a vertex cover for the preprocessed graph and that the complement set $I=V\setminus S$
of the vertex cover is an independent set containing all terminals. To use \cref{thm:tw-dp},
we will first construct a (trivial) nice tree decomposition for~$I$ and then add $S$ to each bag.

Note that a terminal $t\in R$ can be part of several demand pairs of the \SF instance. 
Consider the \emph{demand graph} $H$ with vertex set $R$ and an edge for each demand pair.
Any subset of $R$ that induces a maximal connected component of $H$ is called a \emph{group}.
Note that every group of terminals must lie in the same connected component of any \SF solution.
For each group $R'\subseteq R$, we create one leaf node of the tree decomposition for each terminal $t\in R'$
and let the corresponding bag contain $t$. We then add a forget node for each such leaf node, which we add
as parent to the leaf with an empty bag. These forget nodes are then connected in a binary tree by adding
join nodes with empty bags (unless the group only contains one terminal in which case we skip this step). 
We proceed in the same way for the Steiner vertices of the independent set $I$,
that is, if we consider $I\setminus R$ to be a group as well we obtain a nice tree decomposition for $I\setminus R$
in which each bag of a leaf node contains one vertex of $I\setminus R$. 
All these trees are then connected using join nodes with empty bags, to obtain a tree decomposition
(of width $0$) for the independent set $I$. Finally, we simply add the vertex cover $S$ to every bag,
which results in a nice tree decomposition (of width $k$) for the graph $G$, such that
every terminal lies in a bag of a leaf node (as~$S\cap R=\emptyset$).

In the obtained tree decomposition, let $V_B$ be the vertices of $G$ contained in all bags in the subtree rooted at the node associated with $B$. By construction, $V_B$ either contains no terminals 
(if $B$ is a bag of the tree decomposition for $I\setminus R$), 
fully contains some groups of $R$ (if $B$ is the bag of the root of a tree decomposition for a group $R'\subseteq R$, 
or if $B$ is a bag of a join node used to connect the tree decompositions for groups in the last step),
or contains some strict subset of only one group of~$R$
(if $B$ is a bag of a non-root node of a tree decomposition for a group~$R'\subseteq R$). 
If no terminals lie in $V_B$ then clearly there are no active terminals for bag $B$. 
However this is also the case if $V_B$ fully contains some groups of~$R$. Hence in both these cases
the set $\Pi_B$ of permutations of active terminals is empty. 
Whenever~$V_B$ contains a strict subset of only one group $R'\subseteq R$,
the active terminals $A_B$ of $B$ are only from this set, i.e., $A_B\subseteq R'$. Thus we can
add the trivial partition $\pi=\{A_B\}$ as the only partition of $\Pi_B$, since all terminals of $R'$ belong to the same component of any solution, including the optimum. 

Clearly, the optimal solution conforms with these sets $\Pi_B$ of permutations, and the total
number $p$ of permutations is at most the number of groups, which is at most $n/2$.
Hence by \autoref{thm:tw-dp} we obtain the algorithm of \autoref{thm:vc}.

\subsection{Runtime lower bound}

Our goal here is to present a reduction showing that the algorithm
we have given for \SF parameterized by vertex cover is essentially optimal,
assuming the ETH. Recall that the ETH is the hypothesis that \textsc{3-SAT} on
instances with $n$ variables cannot be solved in time $2^{o(n)}$. We will give
a reduction that given a \textsc{3-SAT} instance $\phi$, produces an equivalent
\SF instance with vertex cover at most $O(n/\log n)$. We stress that our
reduction works even for unweighted instances.

\begin{theorem} If there exists an algorithm which, given an unweighted \SF
instance on~$n$ vertices with vertex cover $k$, finds an optimal solution in
time $2^{o(k\log k)}n^{O(1)}$, then the ETH is false. \end{theorem}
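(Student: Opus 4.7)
The plan is to give a polynomial-time reduction from \textsc{3-SAT} on $n$ variables to unweighted \SF that produces an instance whose graph has vertex cover of size $k=O(n/\log n)$. Given such a reduction, a hypothetical algorithm for \SF running in $2^{o(k\log k)}\cdot n^{O(1)}$ time would, when applied to the reduced instance, yield a $2^{o((n/\log n)\log(n/\log n))}\cdot n^{O(1)}=2^{o(n)}$ algorithm for \textsc{3-SAT}, contradicting the ETH. Thus the entire substance of the theorem lies in the construction of the reduction.

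First invoke the Sparsification Lemma to assume the number of clauses is $m=O(n)$. Set $g=\lfloor\log(n/\log n)\rfloor=\Theta(\log n)$ and partition the variables into $b=\lceil n/g\rceil=\Theta(n/\log n)$ blocks of $g$ consecutive variables. The vertex cover is $S=\{s_\alpha:\alpha\in\{0,1\}^g\}$, of size $2^g=\Theta(n/\log n)$, where each $s_\alpha$ is labeled by the $g$-bit string $\alpha$ interpreted as an assignment to one block's variables. All remaining vertices lie in the independent set. For each block $B_j$ add a \emph{choice gadget} consisting of two IS vertices $u_j, v_j$ each adjacent to every vertex of $S$, together with the demand $\{u_j,v_j\}$; the cheapest way to satisfy this demand is to add the edges $u_j s_\alpha$ and $s_\alpha v_j$ for a single $s_\alpha\in S$ at cost $2$, and the label $\alpha$ of the chosen $s_\alpha$ encodes the assignment of $B_j$.

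For each clause $C_i$ add a \emph{verification gadget} in the independent set. The natural per-literal building block is an IS vertex adjacent only to those $s_\alpha$ whose labels set the relevant bit to the value making the literal true, paired with a demand towards the choice-gadget vertex of the block containing that variable: if the block's chosen $s_\alpha$ already has the right bit value the new vertex attaches at cost $1$, otherwise an extra edge is needed at cost $2$. The three per-literal sub-gadgets of a clause are then combined through a clause-specific hub and with sufficient amplification so that the per-clause cost is effectively binary, with any unsatisfied clause contributing a strictly higher penalty than all possible savings from satisfied clauses elsewhere. With this in place, $\phi$ is satisfiable if and only if the \SF optimum falls at most a specific polynomial-time computable threshold, and the bounds $|S|=O(n/\log n)$, $O(n)$ independent set vertices, and polynomially many edges are immediate. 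The overall framework is in the spirit of the recent slightly super-exponential vertex-cover lower bounds of \cite{LampisMV23,LampisV23}, though the gadgets have to be tailored to \SF.

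The main obstacle I anticipate is precisely the combination step for the clause gadget. A naive encoding where a falsified literal costs only $+1$ more than a satisfied one yields a MAX-SAT-like objective whose optimum does not faithfully decide satisfiability, since a single unsatisfied clause contributes only a small extra cost that can be offset by savings on clauses elsewhere. Ensuring an unambiguous gap requires exploiting the forest (not tree) structure of \SF solutions to let satisfied clauses share edges, while amplifying the penalty on any unsatisfied clause beyond those possible savings. This is also the step where \SF genuinely diverges in behaviour from \ST, which admits a $2^{O(k)}$ algorithm even parameterized by treewidth and therefore provably cannot support an analogous compression.
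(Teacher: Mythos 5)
Your high-level plan is correct and matches the paper's strategy: reduce \textsc{3-SAT} to unweighted \SF with vertex cover $O(n/\log n)$, encoding $\Theta(\log n)$ bits per demand by the choice of which vertex-cover vertex the demand pair routes through, so that a $2^{o(k\log k)}$ algorithm would contradict the ETH. The choice gadget you sketch is a reasonable variant of the paper's. However, there is a genuine gap, and it is exactly the one you flag yourself: the clause gadget is never actually built, and the direction you sketch for it does not obviously work. A per-literal terminal $w$ with a demand to the choice vertex $u_j$ costs $1$ if $u_j$'s chosen $s_\alpha$ satisfies the literal and $\ge 2$ otherwise, which is a MAX-SAT-style objective. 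In the \emph{unweighted} setting you cannot simply scale that $+1$ penalty; the ``clause-specific hub with sufficient amplification'' that you invoke is the crux of the proof, and without it the reduction does not decide satisfiability.

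The paper resolves this with a clause gadget whose cost is \emph{constant} regardless of satisfaction, so no gap/amplification argument is needed at all. Each clause $c$ gets two fresh terminals $c_1,c_2$ with the single demand $\{c_1,c_2\}$; $c_1$ is adjacent to the left vertices $\ell^i_j$ of the (at most three) choice gadgets corresponding to $c$'s literals, and $c_2$ is adjacent to exactly those middle vertices $m^i_{j'}$ consistent with the literal being true. The budget is tight enough that every $\ell^i_j,r^i_j$ and every $c_1,c_2$ uses exactly one incident edge, so every clause contributes exactly $2$; the demand $\{c_1,c_2\}$ is then satisfiable iff some choice gadget routed its own demand through a consistent middle vertex, i.e.\ satisfiability is enforced purely by connectivity, not by cost. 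Two secondary structural differences you should also be aware of: (i) the paper first preprocesses the formula (\cref{claim:tsat3}) so each clause has at most one variable per group, which is what guarantees $c_1$'s three potential neighbours lie in three \emph{disjoint} choice gadgets with their own private middle vertices — your single shared set $S$ of $2^g$ assignment vertices creates potential cross-talk between blocks that pick the same $s_\alpha$, which must be ruled out; and (ii) the sparsification lemma is unnecessary here since the clause terminals form an independent set and hence do not affect the vertex-cover bound even for $m=\mathrm{poly}(n)$ clauses.
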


\begin{proof}

We present a reduction from \textsc{3-SAT}. Before we proceed, we would like to
add to our formula the requirement that the variable set comes partitioned into
three sets in a way that each clause contains at most one variable from each
set. It is not hard to show that this does not affect the complexity of the
instance much, as we demonstrate in the following claim.

\begin{claim}\label{claim:tsat3} %
Suppose that there exists an algorithm that takes as input a
\textsc{3-SAT} instance $\phi$ on $n$ variables and a partition of the
variables into three sets of equal size, such that each clause contains at most
one variable from each set, and decides if $\phi$ is satisfiable in time
$2^{o(n)}$. Then, the ETH is false. \end{claim}

\begin{proof}%
Suppose we start with an arbitrary \textsc{3-SAT} formula $\psi$
on $n$ variables $x_1,\ldots,x_n$. Under the ETH, it should be impossible to
decide if $\psi$ is satisfiable in time $2^{o(n)}$. We will edit $\psi$ to
produce the partition of the variables into three sets. For each variable
$x_i$, we introduce two new variables $x_i', x_i''$, and add to the formula the
clauses $(x_i\to x_i')\land (x_i'\to x_i'')\land (x_i''\to x_i)$.  The
variables of $\psi$ can now be partitioned into three sets $X =
\{x_1,\ldots,x_n\}$, $X'=\{ x_1',\ldots,x_n'\}$, and $X''=\{
x_1'',\ldots,x_n''\}$. Furthermore, because of the clauses we added it is not
hard to see that in any satisfying assignment $x_i, x_i'$, and $x_i''$ must be
given the same value. We then repeat the following: as long as there exists a
clause that contains more than one variable from $X$, arbitrarily pick a
literal of this clause that contains $x_i\in X$ and replace in it $x_i$ by
$x_i'$ or $x_i''$, in a way that the clause contains at most one variable from
each group. The new formula we have constructed in this way is equisatisfiable
to $\psi$, has $n'=O(n)$ variables and $O(n+m)$ clauses, and its variables are
partitioned into three sets so that each clause contains at most one variable
from each set. Therefore, the new formula cannot be solved in time $2^{o(n')}$
under the ETH.  \end{proof}

In the remainder we will then assume that we are given a formula $\phi$ on $3n$
variables which are partitioned into three sets of size $n$ as specified by the
previous claim. Without loss of generality, suppose that $n$ is a power of $4$
(this can be achieved by adding dummy variables). Note that this ensures that
$\frac{\log n}{2}$ and $\sqrt{n}$ are both integers.

We construct an equivalent instance of \SF as follows. Let
$L=\lceil\frac{n}{\log^2n}\rceil$. We  begin by constructing $i$ choice
gadgets, i.e., for $i\in \{1,\ldots,3\log n\}$ we make:

\begin{itemize}

\item $2L$ left vertices, labeled $\ell^i_j$, for $j\in\{0,\ldots,2L-1\}$.

\item $2L$ right vertices, labeled $r^i_j$, for $j\in\{0,\ldots,2L-1\}$.

\item $\sqrt{n}$ middle vertices, labeled $m^i_j$, for
$j\in\{0,\ldots,\sqrt{n}-1\}$. 

\item We connect all middle vertices to all left and right vertices, that is,
for all $j\in\{0,\ldots,2L-1\}$ and $j'\in\{0,\sqrt{n}-1\}$ we connect
$\ell^i_j$ and $r^i_j$ to $m^i_{j'}$.

\item For each $j\in\{0,\ldots,2L-1\}$ we add a demand from $\ell^i_j$ to
$r^i_j$.

\end{itemize}

Notice that the graph we have constructed so far contains $3\log n$ choice
gadgets, each of which has $4L+\sqrt{n}=O(n/\log^2n)$ vertices, so the graph at
the moment contains $O(n/\log n)$ vertices in total.

Before we proceed, let $X=X_a\cup X_b\cup X_c$ be the set of $3n$ variables of
$\phi$ that was given to us partitioned into three sets of size $n$. We
partition $X$ into $3\log n$ groups $X_1,\ldots,X_{3\log n}$ in a way that (i)
$|X_i|\le \lceil n/\log n\rceil$ for all $i\in\{1,\ldots,\log n\}$ and (ii) for
all $i\in\{1,\ldots,\log n\}$ we have $X_i$ is contained in one of
$X_a,X_b,X_c$. This can be done by taking the $n$ variables of $X_a$ and
partitioning them arbitrarily into groups $X_1,\ldots,X_{\log n}$ of size as
equal as possible (therefore at most $\lceil n/\log n\rceil$), and we proceed
similarly for $X_b,X_c$. Rename the variables of $\phi$ so that for each $i$ we
have that $X_i=\{x_{(i,0)},\ldots,x_{(i,\lceil n/\log n \rceil-1)}\}$.

To give some intuition, we will now say that, for $i\in\{1,\ldots, 3\log n\}$,
the choice gadget $i$ represents the variables of the set $X_i$. In particular,
for each $j\in\{0,\ldots 2L-1\}$, we will say that the way that the demand
$\ell^i_j\to r^i_j$ was satisfied encodes the assignment to the $\frac{\log
n}{2}$ variables $\{x_{(i,\frac{j\log n}{2})},\ldots,x_{(i,\frac{(j+1)\log
n}{2}-1)}\}$. More precisely, in our intended solution the demand $\ell^i_j\to
r^i_j$ is satisfied by connecting both terminals to a common middle vertex
$m^i_{j'}$. We can infer the assignment to the $\frac{\log n}{2}$ variables
this represents simply by writing down the binary representation of $j'$, which
is a number between $0$ and $\sqrt{n}-1$, hence a number with $\frac{\log
n}{2}$ bits. Note that this way we represent $2L \cdot \frac{\log n}{2}\ge
\lceil\frac{n}{\log n}\rceil$ variables, that is, we can represent the
assignment to all the variables of the group.

Armed with this intuition, we can now complete our construction. For each
clause $c$ we construct two new vertices, $c_1,c_2$ and add a demand from $c_1$
to $c_2$.  For each literal contained in $c$, suppose that the literal involves
the variable $x_{(i,\frac{j\log n}{2}+\alpha)}$ for $i\in\{1,\ldots,3\log n\}$,
$j\in \{0,\ldots,2L-1\}$, $\alpha\in\{0,\ldots,\frac{\log n}{2}-1\}$. We then
connect $c_1$ to $\ell^i_j$. Furthermore, if $x_{(i,\frac{j\log n}{2}+\alpha)}$
appears positive in $c$, we connect $c_2$ to all $m^i_{j'}$ such that the
binary representation of $j'$ has a $1$ in position $\alpha$. If on the other
hand $x_{(i,\frac{j\log n}{2}+\alpha)}$ appears negative in $c$, we connect
$c_2$ to all $m^i_{j'}$ such that the binary representation of $j'$ has a $0$
in position $\alpha$. In other words, we connect $c_2$ to all the middle
vertices to which $\ell^i_j$ could be connected and are consistent with an
assignment that satisfies $c$ using the current literal. After repeating the
above for all literals of each clause the construction is complete. We set the
target cost to be $B=2m+12L\log n$.

Before we argue about the correctness of the reduction, let us observe that if
the reduction preserves the satisfiability of $\phi$, then we obtain the
theorem, because the instance we constructed has vertex cover $k=O(n/\log n)$
and size polynomial in the size of $\phi$.  Indeed, as we argued the choice
gadgets have $O(n/\log n)$ vertices in total, and all further edges we added
have an endpoint in a choice gadget. If there was an algorithm solving the new
instance in time $k^{o(k)}n^{O(1)}$, this would give a $2^{o(n)}$ algorithm to
decide~$\phi$.

Regarding correctness, let us first observe that if $\phi$ is satisfiable, we
can obtain a valid solution using the intuitive translation from assignments to
choice gadget solutions we gave above. In particular, for each
$i\in\{1,\ldots,3\log n\}$ and $j\in\{0,\ldots,2L-1\}$, we consider the
assignment to variables $\{x_{(i,\frac{j\log n}{2})},\ldots,
x_{(i,\frac{(j+1)\log n}{2}-1)}\}$ as a binary number, which must have a value
$j'$ between $0$ and $\sqrt{n}-1$. We then connect both $\ell^i_j, r^i_j$ to
$m^i_{j'}$. Repeating this satisfies all demands internal to choice gadgets and
uses $3\log n \cdot 4L = 12 L \log n$ edges. Consider now a clause $c$ and the
demand from $c_1$ to $c_2$. Since we started with a satisfying assignment, $c$
must contain a true literal, say involving the variable $x_{(i,\frac{j\log
n}{2}+\alpha)}$. We select the edge from $c_1$ to $\ell^i_{j}$. Furthermore, we
observe that $c_2$ must be a neighbor of all vertices $m^i_{j'}$ such that the
bit in position $\alpha$ of the binary representation of $j'$ agrees with the
value of $x_{(i,\frac{j\log n}{2}+\alpha)}$. Since $\ell^i_{j}$ is already
connected to such a $m^i_{j'}$, we select the edge from that vertex to $c_2$ to
satisfy the demand for this clause. We have therefore spent $2m$ further edges
for the clause demands and have used a budget of exactly $B$.

For the converse direction, suppose we have a solution of cost $B$. We first
observe that each vertex $r^i_j$ must be connected to a middle vertex
$m^i_{j'}$, since all right vertices are terminals, but such vertices only have
edges connecting them to middle vertices. Recall that, for each $i,j$, the left
vertex $\ell^i_j$ must be in the same component of the solution as $r^i_j$,
since there is a demand between these two vertices. Hence, each~$\ell^i_j$ is
in the same component of the solution as some $m^i_{j'}$. We now slightly edit
the solution as follows: suppose there exists a vertex $\ell^i_j$ which is not
directly connected in the solution to any middle vertex~$m^i_{j'}$. Since this
vertex is in the same component as one such vertex $m^i_{j'}$, we add to the
solution the edge connecting them, and since this creates a cycle, remove from
the solution another edge incident on $\ell^i_j$. Doing this repeatedly ensures
that each $\ell^i_j$ is connected to a middle vertex $m^i_{j'}$ in the solution
without increasing the total cost.

We now observe that since each $\ell^i_j$ and each $r^i_j$ is connected to at
least one middle vertex $m^i_{j'}$ in the solution, this already uses a cost of
$3\log n\cdot 4L = 12 L\log n$. Furthermore, for each clause we have
constructed two terminals, each of which must use at least one of its incident
edges, giving an extra cost of~$2m$. Since our budget is exactly $2m+12L\log
n$, we conclude that each terminal constructed for a clause is incident on
exactly one edge, and each $\ell^i_j$ and each $r^i_j$ is connected to exactly
one middle vertex. Crucially, these observations imply the following fact: if
for some $i,j,j'$ we have that $\ell^i_j$ and $m^i_{j'}$ are in the same
component of the solution, then the edge connecting $\ell^i_j$ and $m^i_{j'}$
is part of the solution. To see this, observe that any path connecting
$\ell^i_j$ and $m^i_{j'}$ that is not a direct edge would need to have length at
least $3$. However, no clause terminal can be an internal vertex of such a
path, since clause terminals have degree $1$ in the solution. Furthermore, if
we remove clause terminals from the graph, left and right vertices also have
degree $1$ in the remaining solution, so such vertices also cannot be internal
in the path. Finally, middle vertices are an independent set, so it is
impossible for all internal vertices of a path of length at least $3$ to be
middle vertices.

Armed with the observation that $\ell^i_j$ and $m^i_{j'}$ are in the same
connected component of the solution if and only if they are directly connected,
we are ready to extract a satisfying assignment from the Steiner forest. For
each $i,j$, if $\ell^i_j$ is connected to $m^i_{j'}$ we write $j'$ in binary
and assign to variable $x_{(i,\frac{j\log n}{2}+\alpha)}$, for
$\alpha\in\{0,\ldots,\frac{log n}{2}-1\}$ the value in  position $\alpha$ of
the binary representation of $j'$. We claim that this assignment must be
satisfying. Indeed, consider the clause $c$, and the terminals $c_1,c_2$ which
represent it. Since these terminals have a demand, they must be in the same
component. Because $c_1$ has at most three neighbors which are in different
choice gadgets (as each clause contains variables from distinct groups), we can
see that $c_1$ must be connected to some $\ell^i_j$ and $c_2$ to some~$m^i_{j'}$
in the solution, such that $\ell^i_j$ and~$m^i_{j'}$ are in the same
component, and are therefore directly connected.  But if $\ell^i_j$ is directly
connected to~$m^i_{j'}$ this means that the assignment we extracted from
$\ell^i_j$ gives a value to a variable $x_{(i,\frac{j\log n}{2}+\alpha)}$ which
satisfies the clause $c$, hence we have a satisfying assignment.  \end{proof}

\section{Feedback Edge Set}

A \emph{feedback edge set} of a graph is a set of edges that when
removed renders the graph acyclic. It is well-known that if $G$ is a connected
undirected graph on $n$ vertices and $m$ edges, then all minimal feedback edge
sets of $G$ have size $k=m-n+1$. Indeed, such a set can be constructed in
polynomial time by repeatedly locating a cycle in the graph and selecting an
arbitrary edge of the cycle to insert into the feedback edge set.

In this section we will consider \SF parameterized by the feedback edge set of
the input graph, which we will denote by $k$. Unlike the vertex cover section,
here our main result is positive: we show that \SF can be solved optimally in
time $2^{O(k)}n^{O(1)}$, that is, in time single-exponential in the parameter.
Since we are able to achieve a single-exponential dependence, it is
straightforward to see that this is optimal under the ETH.

\begin{theorem}\label{thm:feseth} %
If there is an algorithm solving \ST in time
$2^{o(k)}n^{O(1)}$, where $k$ is the feedback edge set of the input, then the
ETH is false. \end{theorem}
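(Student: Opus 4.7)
The plan is to reduce \pname{3-SAT} to (unweighted) \ST via a construction whose resulting graph has feedback edge set linear in the size of the input formula. Combined with the Sparsification Lemma of Impagliazzo, Paturi, and Zane, which rules out $2^{o(n)}$ time algorithms for \pname{3-SAT} on instances with $n$ variables and $m=O(n)$ clauses under the ETH, this will imply that a hypothetical $2^{o(k)}n^{O(1)}$ algorithm for \ST (where $k$ is the feedback edge set) would solve \pname{3-SAT} in $2^{o(n)}$ time, violating the ETH.

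First, I would invoke sparsification to assume without loss of generality that the given formula $\phi$ has $n$ variables $x_1,\ldots,x_n$ and $m=O(n)$ clauses $c_1,\ldots,c_m$. Then I would build a graph $G$ as follows. Introduce a root terminal $r$. For each variable $x_i$, add a \emph{variable gadget} in the form of a $4$-cycle $t_i^1 - T_i - t_i^2 - F_i - t_i^1$, where $t_i^1$ and $t_i^2$ are declared terminals while $T_i$ and $F_i$ are Steiner vertices that represent the positive and negative literals of $x_i$, respectively. Connect $r$ to each $t_i^1$ by a single edge. For each clause $c_j$, introduce a further terminal $t_j$, and for every literal of $c_j$ add an edge from $t_j$ to the corresponding literal vertex ($T_i$ if $x_i\in c_j$ and $F_i$ if $\neg x_i\in c_j$). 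The target cost would be $B=3n+m$.

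For correctness, a satisfying assignment produces a Steiner tree of cost exactly $B$ by selecting, for each $i$, the two gadget edges $t_i^1 - T_i - t_i^2$ (if $x_i$ is true) or $t_i^1 - F_i - t_i^2$ (if $x_i$ is false) together with $r - t_i^1$, and by attaching each $t_j$ via one edge to a literal vertex that is already in the tree by virtue of the assignment. Conversely, because $t_i^2$ has only the two neighbors $T_i$ and $F_i$ in $G$, any Steiner tree must include at least one of $\{T_i,F_i\}$ for each $i$; combined with the $1+2n+m$ terminals this gives a lower bound of $3n+m$ on the number of edges. If this bound is attained with equality, then exactly one of $\{T_i,F_i\}$ is used per variable (defining an assignment) and exactly one edge is used for each $t_j$, which must connect to a literal vertex of $c_j$ already in the tree, witnessing a satisfied literal of $c_j$.

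Finally, a direct count gives $|V(G)|=4n+1+m$ and $|E(G)|=5n+3m$, so the feedback edge set of $G$ has size $k=|E(G)|-|V(G)|+1=n+2m=O(n)$ while $|V(G)|=O(n)$, and the construction takes polynomial time. The delicate point in the proof is the reverse correctness direction — in particular, ruling out that a Steiner tree might bypass some variable gadget by routing through clause-terminal edges — but this is handled by the observation that $t_i^2$ has degree $2$ in $G$, so that at least one Steiner vertex of each gadget must always be present in any feasible Steiner tree. Putting everything together yields the claimed ETH lower bound.
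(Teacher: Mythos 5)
Your proof follows the same high-level route as the paper: compose the sparsification lemma of Impagliazzo, Paturi, and Zane with a polynomial reduction from \textsc{3-SAT} to \ST whose output graph has feedback edge set $O(n)$. Where you differ is in the middle step. The paper invokes, as a black box, the existing chain of NP-hardness reductions for \ST and merely observes that the resulting graph has $|E|=O(m')=O(n)$ edges; to make this observation go through cleanly, the paper first rewrites the formula so that each variable occurs at most three times. You instead give a self-contained explicit gadget construction (a $4$-cycle per variable with two terminal ``anchors'' $t_i^1,t_i^2$ and two literal Steiner vertices $T_i,F_i$, a root $r$, and a clause terminal per clause), and you count edges directly to get $|E|=5n+3m$ and hence $k=|E|-|V|+1=n+2m=O(n)$. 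This makes the occurrence-bounding step unnecessary, and yields a concrete, checkable construction where the paper gives only a sketch. Both proofs are correct.

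One small inaccuracy in your converse direction: the claim that in a solution attaining cost exactly $3n+m$ ``exactly one edge is used for each $t_j$'' is not forced. A clause terminal $t_j$ may have degree $2$ or more in the tree, e.g.\ acting as a bridge $L_i - t_j - L_{i'}$ that connects two variable gadgets; one can construct small examples where this happens at the optimum. Fortunately this does not break the argument: what is actually used is only that every tree neighbor of $t_j$ is a \emph{chosen} literal vertex $L_i$ of $c_j$ (non-chosen literal vertices are simply not present in a tree with only $3n+m+1$ vertices), and any such neighbor already witnesses a satisfied literal of $c_j$. You should therefore weaken ``exactly one edge'' to ``at least one edge, and every such edge goes to a chosen literal vertex of $c_j$''; the rest of the proof stands.
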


\begin{proof}%
The proof follows from the sparsification lemma of Impagliazzo,
Paturi, and Zane \cite{ImpagliazzoPZ01} composed with the standard reduction
proving that \ST is NP-complete. We sketch the details. Suppose we are given a
\textsc{3-SAT} formula $\phi$ with $n$ variables and $m$ clauses. The
sparsification lemma shows that in order to disprove the ETH it is sufficient
to show that we can decide if $\phi$ is satisfiable in time $2^{o(n)}$ under
the restriction that $m=\Theta(n)$. We edit $\phi$ to obtain an equisatisfiable
formula $\phi'$ where every variable appears at most $3$ times (for each
variable $x$ appearing $f>3$ times, we replace each occurrence of $x$ with $f$
fresh variables $x_1,\ldots, x_f$ and add the clauses $(x_1\to x_2)\land
(x_2\to x_3)\land\ldots (x_f\to x_1)$). The new formula $\phi'$ has $n'=O(n)$
variables and $m'=O(n)$ clauses. We now execute the chain of reductions showing
that \ST is NP-hard (e.g. from  \cite{karp1975computational}), which produce an
instance on a graph $G=(V,E)$ with $|E|=O(m')$, therefore, $|E|=O(n)$. The new
instance has feedback edge set size $k<|E|$, therefore an algorithm solving the
new instance in time $2^{o(k)}|V|^{O(1)}$ would falsify the ETH.  \end{proof}

Let us now proceed to the detailed presentation of the algorithm. Suppose that
we are given 
a budget~$b$ and we want to decide if there exists a \SF solution $F$ such
that $\cost(F)\le b$.
We start by applying a simple reduction rule.

\begin{description} 
\item[Rule 3:]\customlabel{rule1fes}{Rule 3} Suppose we have a \SF instance on
graph $G$ with weight function $w$ and budget~$b$, such that a vertex $u\in V$
has degree $1$.  If $u\not\in R$, then delete $u$.  If $u\in R$, let $v$ be the
unique neighbor of $u$. Then  set $b':=b-w(uv)$, delete $u$ from the graph and
the demand $\{u,v\}$ from $D$ if it exists, and replace, for each $x\in
V\setminus\{u,v\}$ such that $\{u,x\}\in D$ the demand $\{u,x\}$ with the
demand $\{v,x\}$.
\end{description}

\begin{lemma}\label{lem:leaf} %
\ref{rule1fes} is safe. 
\end{lemma}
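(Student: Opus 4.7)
The plan is to verify that Rule 3 preserves the answer in both cases, i.e.\ show that the original instance admits a feasible Steiner forest of cost at most $b$ if and only if the reduced instance admits one of cost at most $b'$.

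First I would handle the easy case where $u\notin R$. Here the only edge incident on $u$ cannot contribute to satisfying any demand, because $u$ is not a terminal and has no other neighbors, so any feasible forest can be replaced by one that excludes $u$ without increasing cost; conversely, every feasible forest of the reduced graph is already a feasible forest of the original with the same cost. Since Rule 3 in this case sets $b':=b$, the equivalence follows immediately.

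The substantive case is $u\in R$. Let $v$ be the unique neighbor of $u$. The key structural observation I would establish is that in every feasible solution $F$ for the original instance, the edge $uv$ must belong to $F$: because $u$ has a demand with some vertex $x\neq u$ (since $u\in R$ implies $u$ appears in $D$), $u$ must lie in a nontrivial component of $F$, and its sole possible incident edge is $uv$. Using this, I would set up a bijection between feasible solutions: given $F$ feasible for the original instance, define $F':=F\setminus\{uv\}$ in the reduced graph; conversely given $F'$ feasible for the reduced instance, define $F:=F'\cup\{uv\}$ (with $u$ reinserted). Clearly $\cost(F)=\cost(F')+w(uv)$, so the budget translation $b'=b-w(uv)$ is exact.

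What remains is checking that feasibility is preserved by this bijection. Since $uv$ is present in $F$, vertices $u$ and $v$ lie in the same component of $F$. Hence for any $x\neq u,v$, the demand $\{u,x\}$ in the original is satisfied by $F$ if and only if $v$ and $x$ lie in the same component of $F$, if and only if the replaced demand $\{v,x\}$ is satisfied by $F'$ (removing the leaf $u$ does not alter any other component). The demand $\{u,v\}$, if present, is automatically satisfied by the edge $uv$ in $F$ and is correctly deleted in the reduced instance. The only subtlety I anticipate is bookkeeping in the case where replacing $\{u,x\}$ by $\{v,x\}$ creates a duplicate of an already existing demand, but since $D$ is treated as a set this collapses harmlessly. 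Combining the two directions yields the equivalence and hence the safety of Rule~3.
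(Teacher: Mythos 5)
Your proof is correct and follows the same approach as the paper's (much terser) argument: delete-if-non-terminal because the pendant edge is useless, and force the edge $uv$ into every feasible solution when $u$ is a terminal, then translate the budget. The paper compresses this into two sentences; your bijection between feasible solutions is a fuller spelling-out of the same idea, relying (as the paper implicitly does) on the standard convention that every terminal participates in at least one demand.
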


\begin{proof}%
If $u\not\in R$ then no optimal solution contains the edge $uv$,
so it is safe to delete $u$. If $u\in R$ then all feasible solutions contain
the edge $uv$. \end{proof}

Observe that if we apply \ref{rule1fes} exhaustively, then the minimum degree of the graph is~$2$. As we show next,
relatively few vertices can have higher degree.

\begin{lemma}\label{lem:deg3} %
Suppose we have a \SF instance with feedback edge
set of size $k$ and minimum degree at least $2$.  Then $G$ contains at most
$2k$ vertices of degree at least $3$. 
\end{lemma}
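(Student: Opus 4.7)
The plan is a direct counting argument using the handshake lemma combined with the standard relationship between feedback edge set size and the cyclomatic number.

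First I would set up notation: let $n=|V|$, $m=|E|$, let $n_{\geq 3}$ denote the number of vertices of degree at least $3$, and let $n_2$ denote the number of vertices of degree exactly $2$. The minimum degree assumption gives $n = n_2 + n_{\geq 3}$. Applying the handshake lemma together with the degree lower bounds yields
\[
2m \;=\; \sum_{v\in V}\deg(v) \;\geq\; 2n_2 + 3 n_{\geq 3} \;=\; 2n + n_{\geq 3},
\]
so $n_{\geq 3} \leq 2(m-n)$.

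Next I would bound $m-n$ using the feedback edge set. If $G$ has $c$ connected components, then after deleting the $k$ edges of the given feedback edge set we are left with a forest on $n$ vertices with (at least) $c$ components, hence at most $n-c$ edges. Therefore $m - k \leq n - c$, which gives $m - n \leq k - c \leq k$ (the last inequality uses $c \geq 0$, and in fact $c \geq 1$ when $n \geq 1$, so we even get $m-n \leq k-1$, but $\leq k$ suffices).

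Combining the two inequalities yields $n_{\geq 3} \leq 2(m-n) \leq 2k$, as claimed. I do not expect any serious obstacle here: the argument is purely combinatorial and self-contained, relying only on the handshake lemma and the definition of a feedback edge set. The only minor subtlety is handling possibly disconnected $G$ correctly, which is why it is cleanest to write the bound as $m - n \leq k - c$ and then drop the $-c$ term.
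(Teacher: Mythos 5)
Your proof is correct and follows essentially the same approach as the paper: both combine the handshake lemma with the degree lower bounds to get $n_{\geq 3} \leq 2(m-n)$, and both then bound $m-n$ via the feedback edge set size. Your phrasing with the inequality $m-k \leq n-c$ is slightly more careful than the paper's $m = k+n-c$ (which implicitly assumes $k$ is exactly the cyclomatic number), but the argument is the same.
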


\begin{proof}%
We observe that if our graph has a feedback edge set of size
$k$, then $m= k+n-c$, where $c$ is the number of connected components of $G$.
This implies that $\sum_{v\in V}d(v) = 2m = 2k+2n-2c$. Let $V_2$ be the set of
vertices of degree exactly $2$ and $V_3=V\setminus V_2$ be the set of vertices
of degree at least $3$. We have $\sum_{v\in V}d(v) \ge 2|V_2|+3|V_3| = 2n
+|V_3|$. We conclude that $|V_3|\le 2k-2c$.  \end{proof}

In the remainder we will assume that we have a \SF instance $G=(V,E)$ with a
feedback edge set $H\subseteq E$ of size $k$, to which \ref{rule1fes} can no longer be
applied. We will say that a vertex $v$ is \emph{special} if $v$ is incident on
an edge of $H$ or $v$ has degree at least $3$. By \cref{lem:deg3} we know that
$G$ contains at most $4k$ special vertices.

We define a \emph{topological edge} (topo-edge for short) as follows: a path
$P$ in $G$ is a topological edge if the two endpoints of $P$ are special
vertices and all internal vertices of $P$ are non-special. Note that by this
definition, all edges of $H$ form topo-edges, since the endpoints of such edges
are special.  We observe the following:

\begin{lemma}\label{lem:topo} %
Suppose we have a graph $G$ with
feedback edge set of size $k$ and minimum degree at least~$2$. Then $G$
contains at most $5k$ topological edges.  
\end{lemma}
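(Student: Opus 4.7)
The plan is to view the topo-edges as the edges of a multigraph $G^\star$ on the vertex set $S$ of special vertices, obtained by contracting each maximal path of non-special degree-$2$ vertices into a single edge, and then count edges of $G^\star$ via a simple vertex/edge accounting.

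First I would verify that the topo-edges form a partition of $E(G)$. Any non-special vertex has exactly two neighbors and is not incident on an edge of $H$, so starting from any edge $e$ incident on such a vertex and following the chain of non-special degree-$2$ vertices eventually reaches a special endpoint on each side; this yields a uniquely determined topo-edge containing $e$. The only potential issue is an isolated cycle consisting entirely of non-special degree-$2$ vertices, but every cycle in $G$ must contain an edge of $H$ by definition of a feedback edge set, and hence at least one vertex of $V_H\subseteq S$; no such isolated cycle exists.

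Next I would carry out the counting. Let $t$ denote the number of topo-edges. Each topo-edge $P$ with $|E(P)|=\ell$ contributes $\ell-1$ internal (non-special) vertices, and every non-special vertex is internal to exactly one topo-edge, so
\[
n \;=\; |S| + \sum_P (|E(P)|-1) \;=\; |S| + m - t,
\]
where the sum ranges over all topo-edges. Thus $t = |S| + m - n$. Using the standard identity $m = n - c + k$ for the feedback edge set number (with $c$ the number of connected components of $G$), this gives $t = |S| + k - c$, and since $c\ge 1$ for any nonempty graph we get $t \le |S| + k - 1$.

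Finally I would bound $|S|$: at most $2k$ vertices are incident on edges of $H$, and by \cref{lem:deg3} at most $2k$ vertices have degree at least $3$, so $|S|\le 4k$ and hence $t\le 5k-1\le 5k$. The only delicate point I anticipate is the loop case, where a topo-edge has coincident endpoints (a cycle through a single special vertex with degree-$2$ internal vertices); the identity above still counts this correctly, since such a topo-edge of length $\ell$ still contributes $\ell-1$ internal vertices, so the accounting goes through unchanged.
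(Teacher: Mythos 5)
Your proof is correct and is in essence the same counting argument as the paper's, just organized dually: where you count non-special vertices as interiors of topo-edges to get the exact identity $t=|S|+m-n=|S|+k-c$, the paper counts degrees of special vertices (each topo-edge contributes~$2$ to $\sum_{v\in V_s}d(v)$) and derives a contradiction from $t>5k$. The two bookkeepings are equivalent via the handshake lemma, and both ultimately rest on $m=n+k-c$ together with $|V_s|\le 4k$ from \cref{lem:deg3} plus the at most $2k$ endpoints of~$H$; your version is slightly cleaner in that it explicitly verifies the partition of $E(G)$ into topo-edges (including the isolated-cycle and loop cases) and yields the exact bound $t\le 5k-1$ rather than arguing by contradiction.
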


\begin{proof}%
Let $V_s$ be the set of special
vertices and $V_t=V\setminus V_s$. If we have more than $5k$ topological edges
in $G$ then $\sum_{v\in V_s}d(v)\ge 10k$. This is because each topological edge
contributes at least $2$ in the sum $\sum_{v\in V_s}d(v)$.  On the other hand,
if $c$ is the number of connected components of $G$, we have $2n+2k-2c = 2m =
\sum_{v\in V}d(v) = \sum_{v\in V_s}d(v) + \sum_{v\in V_t} d(v) = \sum_{v\in
V_s} d(v) +2|V_t|$.  However, $|V_t|\ge n-4k$ by \cref{lem:deg3} and the fact
that at most $2k$ vertices are incident on $H$.  Hence, $2n+2k-2c\ge \sum_{v\in
V_s}d(v) +2n-8k$.  This implies that $\sum_{v\in V_s}d(v)<10k$.  Hence, it is
impossible to have more than $5k$ topological edges. \end{proof}

We are now ready to state the main algorithmic result of this section.

\begin{theorem} 
There is an algorithm that solves \SF on instances with $n$
vertices and a feedback edge set of size $k$ in $2^{O(k)}n^{O(1)}$ time. 
\end{theorem}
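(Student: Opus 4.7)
The plan is to build on the preparatory lemmas already established. First I would exhaustively apply \ref{rule1fes} to ensure minimum degree $2$; by \cref{lem:deg3} and \cref{lem:topo}, what remains has at most $4k$ special vertices and at most $5k$ topological edges. I would then branch over all $2^{O(k)}$ subsets $S$ of the topo-edges, where $S$ represents the set of topo-edges that are \emph{fully} contained in the optimal solution. For each such guess, I would pay the total weight of all edges in $S$, and compute the induced partition of the special vertices into connected components $C_1,\ldots,C_t$ obtained by contracting every topo-edge in $S$ (so these components are entirely determined by $S$). Any remaining topo-edge $P$ has both endpoints in some pair $(C_i,C_j)$ and we know no edge of $P$ is used to connect $C_i$ to $C_j$ through $P$; also, any demand with at least one endpoint being a special vertex must now be satisfied through components of $S$, so either the guess is already consistent with these demands, or we reject it.

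Next I would dispatch demands involving internal vertices of topo-edges. For a topo-edge $P$ whose endpoints lie in the same component $C_i$, the problem inside $P$ decouples: every internal demand whose other endpoint is in $C_i$ is automatically satisfiable through either endpoint of $P$, and every internal-internal demand on $P$ only requires a connected subpath covering both terminals, so an optimum sub-solution for $P$ can be obtained in polynomial time (e.g. by a small interval-cover DP along $P$). For a topo-edge $P=v_0 v_1\cdots v_\ell$ whose endpoints lie in different components $C_i$ and $C_j$, any internal terminal that has a demand to a vertex outside $P$ is forced to be routed through exactly one of the two endpoints of $P$ (since the opposite endpoint sits in a different component, and $P$ is not fully used); which endpoint it uses is determined by its demand partner's component. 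This uniquely determines the initial prefix and final suffix of $P$ that must be included, and internal-internal demands on $P$ are again handled by the interval-cover DP on the remaining free middle.

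After this reduction, the only nontrivial demands left are between internal terminals of \emph{distinct} topo-edges that share the same pair of endpoint components $(C_i,C_j)$. I would handle each such pair $(C_i,C_j)$ independently: the relevant subinstance looks like two super-nodes joined by a collection of parallel paths, and on each such path every chosen subgraph is a union of a prefix attached to $C_i$ and a suffix attached to $C_j$ (since the middle edge of the path cannot be all used, as that would have placed $P$ in $S$). The task is to pick, on each path, the cutting point $e_P$ (the one edge of $P$ not taken), so that all cross-demands are satisfied at minimum total edge cost. The key observation is that specifying $e_P$ on every path is equivalent to choosing, for each internal vertex, whether it belongs to the $C_i$-side or the $C_j$-side; a cross-demand $\{u,u'\}$ with $u\in P,\ u'\in P'$ is satisfied iff $u,u'$ end up on the same side. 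I would model this by a standard reduction to \pname{Min $s$-$t$ Cut}: build an auxiliary graph with source $s=C_i$, sink $t=C_j$, the internal vertices of each path as nodes connected in series by their original edge weights, and for each cross-demand add infinite-capacity edges identifying the two endpoints (or, equivalently, contract them). A minimum $s$--$t$ cut then selects exactly one edge per path to omit and minimises the omitted-path cost, which after subtraction from the total path weight gives the optimum local cost.

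The main obstacle is precisely the last step: verifying that the parallel-paths subinstance really does reduce to \pname{Min Cut} without hidden dependencies across pairs $(C_i,C_j)$, and that contracting demand endpoints preserves optimality. The decoupling works because, after the earlier case analysis, no remaining demand crosses two different pairs, and each path in a pair is independent of paths in other pairs. Combining the $2^{O(k)}$ branching with the polynomial time spent per branch (solving at most $O(k^2)$ small \pname{Min Cut} instances and $O(k)$ interval DPs) yields the claimed $2^{O(k)}n^{O(1)}$ running time, matching the ETH lower bound of \cref{thm:feseth}.
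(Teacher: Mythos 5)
Your high-level plan is the same as the paper's: guess the set $S$ of fully-used topo-edges ($2^{O(k)}$ branches), deduce the component structure of the special vertices from $S$, discharge demands by local reasoning on each topo-edge, and finish with a $\pname{Min Cut}$ reduction on parallel-path bundles between each pair of components. The interval-cover DP for a topo-edge whose endpoints fall in the same component is a legitimate alternative to the paper's ``delete one edge, run \ref{rule1fes}, observe all branches yield the same residual instance'' argument, and probably a bit cleaner.

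However, the $\pname{Min $s$-$t$ Cut}$ reduction, as written, optimizes in the wrong direction. The SF cost you want to minimize is the weight of the \emph{kept} edges, which equals (total path weight) minus (weight of the omitted edges), so you need the cut -- the omitted edges -- to have \emph{maximum} weight, not minimum. Running min cut on the original edge weights, as you propose, would minimize the omitted weight and therefore \emph{maximize} the SF cost. The paper fixes this by solving min cut under the complemented weight function $w'(e) = N - w(e)$ for a sufficiently large $N$: since in a parallel-paths graph every minimal $c_1$-$c_2$ cut contains exactly one edge per path, the cut weight is $(\text{number of paths})\cdot N - \sum_{e\in F_c} w(e)$, and minimizing it maximizes $\sum_{e\in F_c} w(e)$, which is exactly the quantity you want to maximize. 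This complementation is essential and relies on the fact that the number of cut edges is fixed across all minimal cuts; a generic max-cut formulation would not be tractable.

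A second, smaller gap: when you say that an internal terminal $w$ of a topo-edge $P$ with a demand to some $w'\notin P$ is routed to ``the endpoint determined by its demand partner's component'', this is circular when $w'$ is itself an internal vertex of another not-fully-used topo-edge $P'$, since $w'$'s component is not yet fixed. The paper makes the case split explicit on the number of distinct components among the four topo-edge endpoints involved: four distinct $\Rightarrow$ reject the guess; three distinct $\Rightarrow$ the shared component forces both sides, so reduce to a demand against a special vertex; two distinct (same pair) $\Rightarrow$ deferred to the $\pname{Min Cut}$ stage. Your argument silently assumes the partner's side is already known and never explicitly rejects the infeasible four-distinct case; you should spell this out.
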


\begin{proof}
Call the set of special vertices $V_s$ and let $V_t=V\setminus V_s$. For the rest of
this proof and for the sake of the analysis, fix an optimal solution $F^\star$.

To begin, we guess which of the $5k$ topological edges according to \cref{lem:topo} are fully used in the
optimal solution. To be more precise, we will say that a topo-edge $P$ is fully
used in $F^\star$ if all edges of the path $P$ are contained in~$F^\star$. This gives
$2^{5k}$ possibilities. In the remainder we will assume that we have correctly
guessed the set of topo-edges which are fully used in $F^\star$.

We now observe that for any two vertices $u,v\in V_s$ we have enough
information to deduce whether~$u,v$ are in the same connected component of
$F^\star$. More precisely, we construct an auxiliary graph $G_s$ with vertex
set $V_s$ that contains an edge between two vertices $u,v\in V_s$ if there
exists a fully used topo-edge whose endpoints are $u,v$.  We now claim that two
vertices $u,v\in V_s$ are in the same component of $F^\star$ if and only if
$u,v$ are in the same connected component of $G_s$. Indeed, if two vertices
$u,v$ are in the same component of $G_s$, then clearly there is a path
connecting them in $F^\star$ going through fully used topo-edges; conversely if
$u,v$ are in the same component of $F^\star$ and the path connecting them goes
through the special vertices $u_1=u, u_2,\ldots, u_\ell=v$ (and all other
vertices are non-special), then the path $u_1,\ldots, u_\ell$ also exists in
$G_s$, as the topo-edge connecting $u_i, u_{i+1}$ must be fully used.

Because of the above, we can now assume that we have a partition $\rho$ of $V_s$
such that $u,v$ are in the same set of $\rho$ if and only if $u,v$ are in the
same connected component of $F^\star$. Notice that this implies that we can remove
from the instance all demands $\{u,v\}\in D$ such that $u,v\in V_s$: if $u,v$ are in
the same set of $\rho$ the demand is automatically satisfied by our guess of the
fully used topo-edges; while if $u,v$ are in distinct sets of $\rho$, we know
that our guess is incorrect and we reject the current instance. Every remaining
demand of our instance is therefore incident on at least one non-special
vertex.

What remains is to decide for topo-edges which are not fully used, which of
their incident edges belong in $F^\star$. Note that this is trivial for topo-edges
consisting only of a single edge, since fully using such a topo-edge is
equivalent to placing the corresponding edge in the solution. We therefore
focus on topo-edges which contain at least one internal (non-special) vertex. 

For this we proceed in several steps.  First, suppose we have a non-fully-used
topo-edge $P$ whose endpoints are adjacent to $u,v\in V_s$ such that $u,v$ are
in the same component of $F^\star$. We edit the instance so that demands with one
endpoint in the interior of $P$ also have their other endpoint in $P$. More
precisely, for each demand $\{x,y\}\in D$ such that $x$ is an internal vertex
of $P$ and $y\not\in P$, we remove $\{x,y\}$ from $D$ and replace it with the
demands $\{x,u\}$ and $\{y,u\}$. It is not hard to see that this is safe,
because any path satisfying the demand $\{x,y\}$ would have to go either
through $u$ or through $v$, but $u,v$ are in the same component of $F^\star$ thanks
to other, fully-used topo-edges, so routing the demand through $u$ or $v$ is
the same.

Consider then a topo-edge $P$ whose endpoints $u,v$ are in the same component
of $F^\star$ and where all demands with one endpoint in an internal vertex of $P$
have the other endpoint in $P$. We simplify the instance by branching: select
an edge $e\in P$, delete $e$ from the instance, and apply \ref{rule1fes} exhaustively
on internal vertices of $P$, until all such vertices are removed.  Since we
have guessed that $P$ is not fully used, at least one of the instances we
produced is equivalent to the original, that is, at least one choice of edge to
delete indeed deletes an edge not used by the optimal solution. It may seem
that since we are branching on $n$ possibilities, this branching will lead to a
running time of $n^k$.  However, we observe that after removing any edge of $P$
and exhaustively applying \ref{rule1fes} we obtain instances which have (i)
the same graph, as all internal vertices of $P$ have been deleted and all other
vertices are unchanged (ii) the same set of demands, as all demands with one
endpoint in an internal vertex of $P$ have either been removed or replaced with
the demand $\{u,v\}$ (which is satisfied by the fully used topo-edges, so can
be removed) and other demands are unchanged.  Hence, among the at most $n$
instances this branching produces, it suffices to select the one with the
maximum remaining budget and solve that, to decide if the original is a Yes
instance. In other words, the branching procedure of this paragraph is a
polynomial-time reduction rule which allows us to eliminate all topo-edges
whose endpoints are in the same component of~$F^\star$.

In the remainder we thus assume that every topo-edge $P$ that is not fully used
has endpoints $u,v\in V_s$ which are in distinct components of $F^\star$. Next, we
deal with the case of ``internal'' demands. Suppose that there exists a
topo-edge $P$ with endpoints $u,v\in V_s$ that contains an internal demand,
that is, there exist $w_1,w_2\in P\setminus\{u,v\}$ such that $\{w_1,w_2\}\in
D$.  Then, all edges in the path from $w_1$ to $w_2$ in $P$ must belong in
$F^\star$, because every other solution that connects $w_1$ to $w_2$ would put
$u,v$ in the same component. We can therefore contract all the edges of the
path from $w_1$ to $w_2$ and adjust our budget and our demands
accordingly: we decrease our budget by the total cost of the edges of
the path from $w_1$ to~$w_2$, we remove all demands that have both endpoints in
that path, and for demands that have one endpoint in that path, we replace that
endpoint by the vertex that results from the contraction of the path.

We now arrive at the case where the endpoints of each topo-edge are adjacent to
vertices from distinct components of $F^\star$ and demands with one endpoint in the
interior of a topo-edge have the other endpoint outside of the topo-edge or in
$V_s$.

We distinguish several cases:

\begin{enumerate}

\item There exists a topo-edge $P$ adjacent to $u,v\in V_s$, an internal vertex
$w\in P\setminus\{u,v\}$ and a vertex $w'\in V_s$ such that $\{w,w'\}\in D$. If
$w'$ is in the same component of $F^\star$ as $u$ (respectively $v$), we include in
the solution all edges in the path in $P$ from $w$ to $u$ (respectively $v$),
contract the selected edges and update our budget and demands accordingly, as above. If
$w',u,v$ are in distinct components of~$F^\star$, then we conclude that the current
guess is incorrect and reject the instance. Correctness of these actions
follows if we assume that the partition $\rho$ of $V_s$ we have computed
corresponds to the connected components of $F^\star$, because in the latter case
any solution that connects $w$ to $w'$ will place $w'$ in the same component as
one of $u,v$, and in the former case we are forced to use the selected path, as
otherwise $u,v$ would end up in the same component of $F^\star$.

\item There exist two topo-edges $P_1,P_2$ adjacent to $u_1,v_1\in V_s$ and
$u_2,v_2\in V_s$ respectively, and vertices $w_1\in P_1$ and $w_2\in P_2$ such
that $\{w_1,w_2\}\in D$. If $u_1,v_1,u_2,v_2$ are in four distinct components
of~$F^\star$ we reject the current guess, as it is impossible to place $w_1,w_2$ in
the same component without also placing some of $u_1,v_1,u_2,v_2$ in the same
component.

\item If $u_1,u_2,v_1,v_2,w_1,w_2$ are as previously but $u_1,v_1,u_2,v_2$ are
in three distinct components of $F^\star$, we can assume without loss of
generality that $u_1,u_2$ are in the same component. We replace the demand
$\{w_1,w_2\}$ with the demands $\{w_1,u_1\}$ and $\{w_2,u_2\}$ and reduce to a
previous case.

\end{enumerate}

Finally, if none of the previous cases apply we have arrived at an instance
where all remaining demands $\{w_1,w_2\}\in D$ satisfy that $w_1,w_2$
belong in two distinct topo-edges $P_1,P_2$, which are incident on $u_1,v_1\in
V_s$ and $u_2,v_2\in V_s$ respectively, such that $u_1,u_2$ are in the same
component of $F^\star$, and so are~$v_1,v_2$, but the component of $u_1,u_2$ is
distinct from the component of $v_1,v_2$.  We will find the best way to satisfy
such demands by solving an auxiliary problem.

Fix two sets $C_1,C_2$ of the partition $\rho$ of $V_s$ which we have computed
and consider every topo-edge $P$ with one endpoint in $C_1$ and the other in
$C_2$. We construct a new instance of \SF on a graph $G_2$ by taking the union
of all such topo-edges and then contracting all vertices of $C_1$ into a single
vertex $c_1$ and all vertices of $C_2$ into a single vertex $c_2$. We include
in the new instance all demands with at least one endpoint on one of the
internal vertices of the topo-edges we used; note that such demands also have
the second endpoint in $G_2$. Let $G_1$ be the instance induced from the
original graph if we delete all internal topo-edge vertices which appear in
$G_2$. Note that every demand of the original instance appears in either $G_1$
or $G_2$.

We will now state two claims:

\begin{claim}\label{claim:g1g2} %
If the optimal \SF solution on the
instance $G_2$ constructed above has cost~$b_2$, then we have the following:
$G$ has a solution of cost at most $b$ consistent with the guess $\rho$ if and
only if $G_1$ has a solution consistent with the guess $\rho$ of cost at most
$b-b_2$.  \end{claim}

\begin{claim}\label{claim:g2opt} %
The optimal solution to $G_2$ can be
computed in polynomial time by a reduction to the \textsc{Min Cut} problem.
\end{claim}

Let us explain why the claims are sufficient to conclude our algorithm. We
consider every pair of sets~$C_1,C_2\in\rho$ (of which there are $O(k^2)$) and
for each such pair the claims imply that we can decompose the instance into two
instances $G_1,G_2$, such that $G_2$ can be solved in polynomial time, and
using the optimal value we calculate for $G_2$ we can reduce solving $G$ to
solving $G_1$. Repeating this for all  pairs results in an instance with no
demands. Putting everything together, we have that for one of $2^{5k}$ possible
guesses (on which topo-edges are fully used) we apply a series of
polynomial-time reduction rules that allow us to decompose the instance into
$O(k^2)$ polynomial-time solvable sub-problems. We therefore obtain an exact
algorithm running in $2^{O(k)}n^{O(1)}$ time. \qedhere

\begin{proof}[Proof of \cref{claim:g1g2}]

If $G_1$ has a solution of cost $b-b_2$ consistent with $\rho$, then we can form
a solution for $G$ by taking the union of the solution for $G_1$ with an
optimal solution for $G_2$.  This will have cost at most~$b$. Furthermore,
recall that all demands of $G$ appear in either $G_1$ or $G_2$. Demands that
appear in $G_1$ are clearly satisfied by the new solution in $G$, while demands
that appear in $G_2$ are satisfied because the solution in $G_1$ is consistent
with $\rho$, so it contains paths between any two $u,v\in C_1$ for each $C_1\in
\rho$.

For the converse direction, suppose $G$ has a solution of cost $b$ consistent
with $\rho$. We observe that this solution restricted to $G_2$ is a feasible
solution (which furthermore places $c_1,c_2$ in distinct components), hence
must have cost at least $b_2$. Therefore, the solution restricted to edges of
$G_1$ has cost at most $b-b_2$. Because all topo-edges included in $G_2$ are
topo-edges which are not fully used (according to the guess that gave us
$\rho$), the solution we construct in $G_1$ is still consistent with $\rho$ and
satisfies all demands.  \end{proof}
\begin{proof}[Proof of \cref{claim:g2opt}]

Before we begin, we perform a basic simplification step. If the instance
contains a Steiner vertex $v$ of degree $2$ (that is, a vertex not incident on
any demand), with neighbors $u_1,u_2$, then we remove $v$ from the instance and
add an edge $u_1u_2$ with weight equal to $w(vu_1)+w(vu_2)$. It is not hard to
see that the new instance is equivalent ($v$ would only be used in a solution
if both its incident edges are used), and we now know that all vertices of
degree $2$ are terminals.

Recall that we have a graph $G_2$ with two special vertices $c_1,c_2$ such that
the graph consists of a collection of parallel paths with endpoints $c_1,c_2$,
and furthermore every demand is between two internal vertices of distinct
paths. For the purposes of the larger algorithm we are interested in computing
the best solution where $c_1,c_2$ are in distinct components, but for the sake
of completeness let us briefly note that $G_2$ can be solved to optimality
without this constraint, as the best solution where $c_1,c_2$ are in the same
component is just a minimum cost spanning tree of $G_2$ (here we are using the
fact that all internal vertices of all paths are terminals).

In order to compute the best solution that places $c_1,c_2$ into distinct
components, we will reduce the problem to \textsc{Min Cut}. Let $N$ be a
sufficiently large value, for example set $N$ to be the sum of all edge weights
of the instance. We construct a \textsc{Min Cut} instance on the same graph but
with weight function $w'(e)=N-w(e)$. Furthermore, for all $w_1,w_2$ such that
$\{w_1,w_2\}\in D$ we add an edge $w_1w_2$ and set $w'(w_1w_2)=n^2N$.

Our claim is now that if $F_c$ is a set of edges that gives a minimum weight
$c_1-c_2$ cut in the new instance, then the complement of $F_c$ is a minimum
cost \SF solution for $G_2$ that places $c_1,c_2$ in distinct components.

To prove the claim, suppose that $F_c$ is a minimum-weight $c_1-c_2$ cut in the
new graph. We observe that $F_c$ cannot include any of the edges we added
between the endpoints of demands $(w_1,w_2)\in D$, as such edges have a very
high cost (deleting every other edge would be cheaper). Furthermore, because
all edges have positive weight and the cut $F_c$ is minimal, removing $F_c$
from the graph must leave exactly two connected components, one containing each
of $c_1,c_2$. Hence, for each $(w_1,w_2)\in D$, if we keep in the graph all
edges not in $F_c$, $w_1,w_2$ are in the same component, and we have a feasible
\SF solution for all the demands. In the other direction, consider an optimal
\SF solution that places $c_1,c_2$ in distinct components, and let $F_c'$ be
the set of edges of $G_2$ \emph{not} included in the solution. $F_c'$ must be a
valid $c_1-c_2$ cut, because it contains at least one edge from each
topological edge connecting $c_1$ to $c_2$ (otherwise $c_1,c_2$ would be in the
same component); and as each demand $(w_1,w_2)\in D$ is satisfied, therefore,
$w_1,w_2$ are either in the component of $c_1$ or in the component of $c_2$. We
have therefore established a one-to-one mapping between optimal minimum cuts
and optimal \SF solutions and conclude the claim by observing that by
minimizing the weight of $F_c$ in the \textsc{Min Cut} instance, we are
maximizing the weight of non-selected edges in the \SF instance (thanks to the
modified weight function), hence we are selecting an optimal \SF solution.
\end{proof}

\end{proof}

\printbibliography

\newpage
\appendix

\end{document}